\newcommand{\vertiii}[1]{{\left\vert\kern-0.25ex\left\vert\kern-0.25ex\left\vert #1 
    \right\vert\kern-0.25ex\right\vert\kern-0.25ex\right\vert}}
\newtheorem{thm}{Theorem}
\newtheorem{coro}{Corollary}
\newtheorem*{thm*}{Theorem}
\newcommand{\setthmtag}[1]{
  \let\oldthethm\thethm
  \newcommand{\thethm}{#1}
  \g@addto@macro\endthm{
    \addtocounter{thm}{-1}
    \global\let\thethm\oldthethm}
  }
\newtheorem*{prop*}{Proposition}
\newtheorem{lemma}[thm]{Lemma}
\newtheorem*{lemma*}{Lemma}
\newtheorem{cor}[thm]{Corollary}
\newtheorem*{cor*}{Corollary}
\newtheorem*{cj*}{Conjecture}
\newtheorem*{Def*}{Definition}
\theoremstyle{definition}
\newtheorem*{rem*}{Remark}
\def\beq{\begin{equation}}
\def\eeq{\end{equation}}
\def\bq{\begin{quote}}
\def\eq{\end{quote}}
\def\ben{\begin{enumerate}}
\def\een{\end{enumerate}}
\def\bit{\begin{itemize}}
\def\eit{\end{itemize}}
\def\lb{\left(}
\def\rb{\right)}
\def\r|{\right|}
\newcommand{\ketbra}[2]{|#1\rangle\langle #2|}
\newcommand{\tr}[1]{\operatorname{tr}\lb#1\rb}
\newcommand{\norm}[1]{\left\|#1\right\|}
\newcommand\be{\begin{equation}}
\newcommand\ee{\end{equation}}
\begin{document}
\title{Optimal quantum algorithm for Gibbs state preparation}

\author{\begingroup
\hypersetup{urlcolor=navyblue}
\href{https://orcid.org/0000-0001-7712-6582}{Cambyse Rouz\'{e}
\endgroup}
}
\email[Cambyse Rouz\'{e} ]{rouzecambyse@gmail.com}
 \affiliation{Inria, Télécom Paris - LTCI, Institut Polytechnique de Paris, 91120 Palaiseau, France}

\author{\begingroup
\hypersetup{urlcolor=navyblue}
\href{https://orcid.org/0000-0001-9699-5994}{Daniel Stilck Fran\c{c}a}
\endgroup}
\email[Daniel Stilck Fran\c ca ]{dsfranca@math.ku.dk}
\affiliation{Univ Lyon, ENS Lyon, UCBL, CNRS, Inria, LIP, F-69342, Lyon Cedex 07, France}
\affiliation{Department of Mathematical Sciences, University of Copenhagen, Universitetsparken 5, 2100 Denmark}

\author{\begingroup
\hypersetup{urlcolor=navyblue}
\href{https://orcid.org/0000-0002-5889-4022}{\'Alvaro M. Alhambra
\endgroup}
}
\email[\'Alvaro M. Alhambra]{alvaro.alhambra@csic.es}
 \affiliation{Instituto de F\'{i}sica T\'{e}orica UAM/CSIC, C. Nicol\'{a}s Cabrera 13-15, Cantoblanco, 28049 Madrid, Spain}

\begin{abstract}

It is of great interest to understand the thermalization of open quantum many-body systems, and how quantum computers are able to efficiently simulate that process. A recently introduced dissispative evolution, inspired by existing models of open system thermalization, has been shown to be efficiently implementable on a quantum computer. Here, we prove that, at high enough temperatures, this evolution reaches the Gibbs state in time scaling logarithmically with system size. The result holds for Hamiltonians that satisfy the Lieb-Robinson bound, such as local Hamiltonians on a lattice, and includes long-range systems. To the best of our knowledge, these are the first results rigorously establishing the rapid mixing property of high-temperature quantum Gibbs samplers, which is known to give the fastest possible speed for thermalization in the many-body setting. We then employ our result to the problem of estimating partition functions at high temperature, showing an improved performance over previous classical and quantum algorithms.
\end{abstract}

\maketitle


Systems in nature are most often coupled to an external bath at a fixed temperature, which naturally induces a thermalization process. The way this process is typically modelled in quantum systems is through a Lindblad master equation, which can be derived from first principles under standard approximations \cite{Rivas_2012}. 
For quantum many-body systems, however, these standard approximations do not hold in general, and different schemes need to be devised (see e.g. \cite{Taj_2008,Mozgunov_2020,Nathan_2020} for different proposals). 

The difficulty of constructing master equations for quantum many-body systems has also hindered the development of quantum algorithms reproducing the thermalization process. Fortunately, recent works \cite{chen2023quantumthermalstatepreparation,chen2023efficient} have constructed master equations that both resemble what is believed to be that thermalization process, while also being simulable on a quantum computer \cite{ding2024efficientquantumgibbssamplers,gilyen2024quantumgeneralizationsglaubermetropolis}. This gives us a method to study the physics of open system thermalization \cite{kastoryano2024littlebitselfcorrection,dabbruzzo2024steadystateentanglementscalingopen} while also potentially allowing us to efficiently prepare Gibbs states on a quantum computer, in analogy with the classical Monte Carlo methods \cite{levin2017markov}. 

One of the most important questions is then: how fast is that thermalization process? Answering it is far from straightforward: We expect that there are instances with very long thermalization times, typically at low temperature, while at the same time for a wide range of models and regimes the process should be rather quick \cite{martinelli1999lectures,kastoryano2016quantum,bardet2023rapid,bardet2024entropy}. The fastest possible way for it to occur is called \emph{rapid mixing}. In systems of $n$ particles this means that, for all initial states $\rho$, there exists a constant $\gamma >0$ such that
\begin{align}\label{equ:rapid_mixing}
   \|e^{t\mathcal{L}^{(\beta)}}(\rho)-\sigma_\beta\|_1 \le \text{poly}(n) e^{-\gamma t},
\end{align}
where $\norm{\cdot}_1$ is the trace norm.
That this convergence cannot be quantitatively improved follows from the trivial example of $1$-local dissipation channels, such as depolarizing noise \cite{depolarizing2016}. Previously, rapid mixing was only known to hold for the thermalizing dynamics of commuting Hamiltonians \cite{Majewski1995,temme2015fast,bardet2023rapid,bardet2024entropy,kochanowski2024rapidthermalizationdissipativemanybody}.

As our main result, we show that rapid mixing holds for all lattice Hamiltonians above a critical temperature $\beta^*$, which we can bound explicitly in terms of the lattice dimension and the Lieb-Robinson velocity.
This rapid mixing, together with the quantum algorithm for implementing $e^{t\mathcal{L}^{(\beta)}}$ from \cite{chen2023efficient}, implies that Gibbs states above that temperature can be prepared efficiently on a quantum computer in $\widetilde{\mathcal{O}}(n)$-time.
In comparison, previous work on the thermalization speed of Gibbs samplers by the authors \cite{rouze2024efficient} did not achieve rapid mixing, since it only estimated the gap of the generator. A gap estimate only implies we need to simulate the Hamiltonian for $\textrm{poly}(n)$ time to converge, whereas Eq.~\eqref{equ:rapid_mixing} implies that $\log(n)$ simulation time suffices. Moreover, explicit bounds on the threshold temperature were missing in \cite{rouze2024efficient} and the proof did not cover long range interactions, as it now does.

As an application, we show how to use the Gibbs sampling algorithm as a subroutine in the efficient estimation of partition functions at high temperatures. For the case of finite range Hamiltonians at high temperatures, we achieve a polynomial speed-up over classical methods based on cluster expansions \cite{Harrow2020b,Mann2021,Haah_2024}. For the case of long range interactions, to the best of our knowledge, our work gives the first result on the efficient estimation of quantum partition functions.

\medskip

\emph{Quantum Gibbs sampling—} We start by introducing the generator of the quantum Gibbs sampling algorithm from \cite{chen2023efficient}. We set a $\beta>0$, and a Hamiltonian $H$ over a $D$-dimensional lattice of spin systems $\Lambda\equiv [0,L]^D$ with system size $n:=|\Lambda|=(L+1)^D$. The Lindbladian is
\begin{align}\label{eq:generator1}
\mathcal{L}^{(\beta)}(\rho)&=-i[B,\rho]+\sum_{a\in \Lambda,\alpha\in[3]}\int_{-\infty}^{\infty}\gamma(\omega)\mathcal{D}^{a,\alpha}_\omega(\rho)\,d\omega\\
&\equiv \sum_{a\in\Lambda}\mathcal{L}_a^{(\beta)}(\rho)\nonumber\,,
\end{align}
where $\gamma(\omega):=\operatorname{exp}(-{(\beta\omega+1)^2}/{2})$ and $\mathcal{D}^{a,\alpha}_\omega$ is a dissipative term with jump operators
\begin{align*}
A^{a,\alpha}(\omega):=\frac{1}{\sqrt{2\pi}}\int_{-\infty}^{\infty} e^{iHt}A^{a,\alpha} e^{-iH t}e^{-i\omega t}\,f(t)\,dt\quad 
\end{align*}
with  $f(t):=\operatorname{exp}(-{t^2}/{\beta^2})\,\sqrt{\beta^{-1}\sqrt{2/\pi}}\,$.
We define the local Pauli matrices on site $a$ as $A^{a,1}\equiv \sigma_x,\,A^{a,2}\equiv \sigma_y$ and $A^{a,3}\equiv \sigma_z$. The Hamiltonian term is given by
\begin{align*}
&B=\sum_{a\in \Lambda,\alpha\in [3]}\, \int_{-\infty}^\infty \beta^{-2}b_1(t/\beta) e^{-i Ht}  \\
&\qquad \times\int_{-\infty}^\infty b_2(t'/\beta) e^{iHt'}A^{a,\alpha} e^{-2i Ht'}A^{a,\alpha} e^{i Ht'}dt' e^{i Ht}dt
\end{align*}
where $b_1,b_2$ are fast-decaying functions with $\|b_1\|_{L_1}, \|b_2\|_{L_1}\le 1$, defined as
\begin{align}\label{eq:b1b2}
&b_1(t):= 2\sqrt{\pi}\, e^{\frac{1}{8}}\, \left(\frac{1}{\operatorname{cosh}(2\pi t)}\ast_t\,\sin(-t)e^{-2t^2}\right),\\
&b_2(t):=\frac{1}{2\pi}\,\sqrt{\frac{1}{\pi}}\,\operatorname{exp}\big(-4t^2-2it\big)\,.
\end{align}
Given that the functions $f(t),b_i(t)$ decay rapidly, the Lindbladian is quasi-local. At the same time, it is such that in the Heisenberg picture
\begin{align*}
\langle X,\mathcal{L}^{(\beta)\dagger}(Y)\rangle_{\sigma_\beta}=\langle \mathcal{L}^{(\beta)\dagger}(X),Y\rangle_{\sigma_\beta}
\end{align*}
where $\langle X,Y\rangle_{\sigma_\beta}:=\operatorname{tr}(X^\dagger\sigma_\beta^{\frac{1}{2}}Y\sigma_\beta^{\frac{1}{2}})$ denotes the KMS inner product. This is the property of \emph{quantum detailed balance}, which guarantees that the Gibbs state $\sigma_\beta\propto {\operatorname{exp}(-\beta H)}$ is a fixed point of the evolution. It is also unique, since $A^{a,\alpha}$ are a complete set of generators.
Finally notice that, in the limit $\beta \rightarrow 0$,
$\mathcal{L}^{(0)\dagger}(X):=\lambda \sum_{a\in \Lambda}\left( \frac{1}{2}\operatorname{tr}_a(X)-X\right)$,
where $\lambda=\frac{1}{\sqrt{2}e^{\frac{1}{4}}}$. This is the generator of the fully depolarizing channel.

\emph{Rapid mixing for high temperatures—} We consider local Hamiltonians $H$ on $\Lambda$ defined through a map from any non-empty finite set $X\subset \Lambda$ to a Hermitian operator $h_X$ supported in $X$ such that $\max_{X} \norm{h_X}_\infty \le h$. Given a subset of the lattice  $A \subset \Lambda$, we also define the Hamiltonian in that subset as
\begin{align*}
		H_A:=\sum_{X\subseteq A}h_X\,.
\end{align*}

We first assume that each $h_X$ has support on at most $k$ sites, and that each site $a$ appears on at most $l$ non-trivial $h_X$. The constants $h,k$ and $l$ are assumed to be independent of system size $n$, and we refer to those as $(k,l)$-local Hamiltonians. Let us define $J\equiv hkl$.

\begin{thm}\label{thmrmhighT1}
Let $H$ be $(k,l)$-local on a $D$ dimensional lattice, and let $\beta^*=\frac{1}{615^{D}2J}$. Then, for any $\beta<\beta^*$ and any initial state $\rho$,

\begin{align}\label{equ:mixing_gap}
\|e^{t\mathcal{L}^{(\beta)}}(\rho)-\sigma_\beta\|_1\le \epsilon\quad \text{ for all}\quad t=\Omega(\log(n/\epsilon))\,. 
\end{align}

\end{thm}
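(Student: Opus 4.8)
The plan is to upgrade the exponential convergence from the statement controlled only by the spectral gap (which yields a $\mathrm{poly}(n)$ mixing time, as noted in the introduction) to one governed by a \emph{modified logarithmic Sobolev inequality} (MLSI) whose constant $\alpha$ is bounded below uniformly in $n$ throughout the regime $\beta<\beta^*$. Granting such an MLSI, the relative entropy to the Gibbs state contracts as $D(e^{t\mathcal{L}^{(\beta)}}(\rho)\,\|\,\sigma_\beta)\le e^{-\alpha t}\,D(\rho\,\|\,\sigma_\beta)$. I would then combine this with the crude initial bound $D(\rho\,\|\,\sigma_\beta)\le \log\norm{\sigma_\beta^{-1}}_\infty = O(n)$, valid because $\log\norm{\sigma_\beta^{-1}}_\infty\le 2\beta\norm{H}_\infty+n\log 2$ is extensive, and with Pinsker's inequality $\norm{\rho-\sigma_\beta}_1\le\sqrt{2\,D(\rho\,\|\,\sigma_\beta)}$. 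Together these give $\norm{e^{t\mathcal{L}^{(\beta)}}(\rho)-\sigma_\beta}_1\le \sqrt{2\,O(n)\,e^{-\alpha t}}$, which drops below $\epsilon$ precisely once $t=\Omega(\alpha^{-1}\log(n/\epsilon))=\Omega(\log(n/\epsilon))$, as claimed.

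Everything thus reduces to the $n$-independent MLSI, which I would establish by an \emph{approximate tensorization} of the relative entropy. Writing $\mathcal{L}^{(\beta)}=\sum_a\mathcal{L}_a^{(\beta)}$ as in \eqref{eq:generator1}, each local piece satisfies detailed balance with respect to $\sigma_\beta$, so it fixes $\sigma_\beta$ and carries an associated conditional expectation $\mathcal{E}_a$; the target is a bound $D(\rho\,\|\,\sigma_\beta)\le C\sum_a D(\rho\,\|\,\mathcal{E}_a^*(\rho))$ with $C=O(1)$. Each local term obeys a trivial $O(1)$ MLSI because the single-site dissipator is a depolarizing-type channel on a bounded region, so approximate tensorization hands over the global MLSI directly. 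The natural anchor is the $\beta\to 0$ limit, where $\mathcal{L}^{(0)\dagger}=\lambda\sum_a(\tfrac12\operatorname{tr}_a-\id)$ is the fully depolarizing channel: it factorizes across sites and tensorizes with $C=1$. The whole problem is to show that $C$ stays $O(1)$ as $\beta$ is turned up to $\beta^*$.

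I would control this perturbation with two inputs. First, Lieb-Robinson bounds, fed by the rapid Gaussian decay of $f$, $b_1$, $b_2$, render the jump operators $A^{a,\alpha}(\omega)$ and the Hamiltonian correction $B$ quasi-local, confining each $\mathcal{L}_a^{(\beta)}$ to an effective $O(1)$-radius support up to exponentially small tails. Second, for $\beta<\beta^*$ a high-temperature cluster expansion yields exponential clustering of correlations in $\sigma_\beta$; its convergence demands that $\beta J$ times a lattice-connectivity factor be small, and it is exactly the growth of this factor with the coordination number that produces the dimensional threshold $\beta^*=1/(615^{D}J)$. These two facts bound the cross-interactions between the local conditional expectations and let me close the tensorization with a constant independent of $n$.

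The step I expect to be the main obstacle is making the approximate tensorization rigorous in the genuinely non-commuting setting: unlike the commuting case, the maps $\mathcal{E}_a$ neither commute nor project onto a common algebra, so one cannot simply iterate a two-region estimate. I would handle this by interpolating the modular (KMS) structure between $\sigma_\beta$ and the maximally mixed state and showing that the resulting ``clustering matrix,'' whose entries measure the residual overlap of the $\mathcal{E}_a$ through $\sigma_\beta$, has operator norm strictly below $1$ when $\beta<\beta^*$. Bounding that norm is precisely where the Lieb-Robinson velocity and the combinatorial $615^D$ factor are consumed; the reduction to MLSI and the final Pinsker estimate are routine by comparison.
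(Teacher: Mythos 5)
Your reduction from an $n$-independent MLSI to rapid mixing is correct and routine: entropy contraction, the extensive bound $D(\rho\|\sigma_\beta)=O(n)$, and Pinsker indeed give $\|e^{t\mathcal{L}^{(\beta)}}(\rho)-\sigma_\beta\|_1\le\sqrt{2\,O(n)\,e^{-\alpha t}}$, hence $t=\Omega(\log(n/\epsilon))$. The problem is that this moves the entire content of the theorem into the MLSI, and the route you propose for it --- approximate tensorization over the local terms $\mathcal{L}_a^{(\beta)}$ --- does not go through in the non-commuting setting, for reasons that are more severe than your last paragraph suggests. At $\beta>0$ the generator $\mathcal{L}_a^{(\beta)}$ is only quasi-local; its kernel is not a local subalgebra, and the KMS-orthogonal projection onto that kernel is not a conditional expectation onto an algebra of operators supported near $a$. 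Consequently the two ingredients of every known approximate tensorization argument are missing: (i) a family of genuine conditional expectations $\mathcal{E}_a$ with controlled supports, and (ii) a local ``conditional'' MLSI for each $\mathcal{L}_a^{(\beta)}$ relative to its own $\mathcal{E}_a$. Your claim that each local term ``obeys a trivial $O(1)$ MLSI because the single-site dissipator is a depolarizing-type channel on a bounded region'' is exactly what fails: that statement is true only at $\beta=0$; for $\beta>0$ the generic lower bounds on (complete) MLSI constants of a detailed-balanced generator degrade with the dimension of the space it acts on, which here is the whole lattice. Finally, the step you flag as the ``main obstacle'' --- showing a ``clustering matrix'' built from the modular structure has norm $<1$ --- is not an argument but a name for the missing lemma; making it rigorous for non-commuting $H$ is an open problem, and it is precisely the obstruction that forces the paper to abandon entropy methods altogether.

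What the paper does instead is work with the oscillator norm $\vertiii{X}=\sum_a\|\delta_a(X)\|_\infty$, $\delta_a(X)=X-\tfrac12 I_a\otimes\operatorname{tr}_a(X)$, in the Heisenberg picture. One derives a differential inequality for $\|\delta_a(X_t)\|_\infty$ by splitting $\mathcal{L}^{(\beta)\dagger}$ into the exactly-solvable depolarizing part $\mathcal{L}_a^{(0)\dagger}$ (which contributes $-\lambda\delta_a$), the deviation $\mathcal{L}_a^{(\beta)\dagger}-\mathcal{L}_a^{(0)\dagger}$, and the commutators $[\delta_a,\mathcal{L}_b^{(\beta)\dagger}]$ for $b\ne a$; the latter two are controlled in $\|\cdot\|_{\infty\to\infty}$ norm by Lieb-Robinson bounds (spatial truncation of the jump operators $A^{a,\alpha}(\omega)$ and of the coherent term $B$) and by a $\beta\to0$ perturbation estimate $\eta(\beta)=\beta J$. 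This yields $\vertiii{X_t}\le e^{-(\lambda-\kappa)t}\vertiii{X}$ with $\kappa\le 4(2r_0+1)^{2D}\beta J+f(r_0)$, and choosing $r_0=4$ produces the $615^D$ threshold from ball-volume combinatorics --- not from a cluster expansion, as you conjecture. The advantage of this route is exactly that it requires no conditional expectations, no tensorization, and no entropic functional: everything reduces to operator-norm locality estimates, which is what makes the non-commuting case tractable. Your proposal, as it stands, has a genuine gap at its central step.
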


This statement of rapid mixing is our main result. We outline the proof steps below, with further details in \cite{supplemental}. 

The efficient preparation of $\sigma_\beta$ follows from rapid mixing together with~\cite[Theorem I.2]{chen2023efficient}, where it is shown that each term in the sum $\sum_{a,\alpha}$ of the generator \eqref{eq:generator1} can be implemented on a quantum computer with an effectively linear cost in $t$.
\begin{coro}
   Under the conditions of Theorem \ref{thmrmhighT1}, we can prepare an $\epsilon$-approximation of $\sigma_\beta$ on a quantum computer with $\widetilde{\mathcal{O}}(n)$ Hamiltonian simulation time and $\widetilde{\mathcal{O}}(n)$ two-qubit gates, where $\widetilde{\mathcal{O}}$ includes polylogarithmic terms in $\epsilon$ and $n$.
\end{coro}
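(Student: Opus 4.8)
The plan is to combine the rapid-mixing bound of Theorem~\ref{thmrmhighT1} with a digital simulation of the semigroup $e^{t\mathcal{L}^{(\beta)}}$, so that the algorithm is nothing more than running a short, discretized dissipative evolution. I would initialize in the maximally mixed state $\one/2^{n}$, which costs no gates to prepare, and fix the evolution time to $t=\Theta(\log(n/\epsilon))$. Theorem~\ref{thmrmhighT1} then guarantees that the ideal state $e^{t\mathcal{L}^{(\beta)}}(\one/2^{n})$ lies within $\epsilon/2$ of $\sigma_\beta$ in trace norm, so it remains only to implement $e^{t\mathcal{L}^{(\beta)}}$ up to a further trace-norm error $\epsilon/2$ within the claimed resource budget; a triangle inequality then yields the final $\epsilon$-approximation.

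For the simulation I would use the decomposition $\mathcal{L}^{(\beta)}=\sum_{a\in\Lambda,\alpha\in[3]}\mathcal{L}_a^{(\beta)}$ into $3n$ quasi-local terms, interleaved through a product formula $e^{t\mathcal{L}^{(\beta)}}\approx\big(\prod_{a,\alpha}e^{(t/r)\mathcal{L}_a^{(\beta)}}\big)^{r}$ for a suitable number of steps $r$. Each local channel $e^{s\mathcal{L}_a^{(\beta)}}$ is realized through \cite[Theorem I.2]{chen2023efficient}, whose Hamiltonian-simulation time and two-qubit-gate count are, up to polylogarithmic factors in the precision, linear in the simulated duration. Since the durations of the $r$ factors associated with a fixed term sum to $t$, the cost of evolving each of the $3n$ terms is $\widetilde{\mathcal{O}}(t)$, for a total of $\widetilde{\mathcal{O}}(nt)$ in both resources. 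Substituting $t=\widetilde{\mathcal{O}}(1)$ reproduces the claimed $\widetilde{\mathcal{O}}(n)$ scaling.

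Closing the $\epsilon/2$ simulation budget requires controlling three contributions: the product-formula error, the truncation of the quasi-local jump operators $A^{a,\alpha}(\omega)$ and of the rapidly decaying filters $f,b_1,b_2$ to a finite light cone and frequency window, and the gate-level digitization inherited from \cite{chen2023efficient}. The essential structural fact making all of these affordable is the quasi-locality of $\mathcal{L}^{(\beta)}$ supplied by the Lieb-Robinson bound: each local term effectively commutes with all but $\mathcal{O}(1)$ of the others, so the product-formula error is governed by an \emph{extensive} sum of commutator norms $\sum_{i<j}\|[\mathcal{L}_i^{(\beta)},\mathcal{L}_j^{(\beta)}]\|=\mathcal{O}(n)$ rather than by an $\mathcal{O}(n^2)$ one, and the filter-truncation errors decay faster than any polynomial in the cut-off radius.

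The step I expect to be the main obstacle is precisely this quantitative error control: propagating the quasi-locality estimates through the product-formula and filter-truncation bounds so that the total simulated time stays $\widetilde{\mathcal{O}}(nt)$ while the trace-norm error remains below $\epsilon/2$. The remaining ingredients—the choice of initial state, the value of $t$ dictated by Theorem~\ref{thmrmhighT1}, and the per-term simulation cost from \cite{chen2023efficient}—are then straightforward bookkeeping.
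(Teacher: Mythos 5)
Your high-level skeleton matches the paper's: initialize in any state (the maximally mixed state is fine, since Theorem~\ref{thmrmhighT1} holds for every initial state), run the dissipative evolution for $t=\Theta(\log(n/\epsilon))$, and charge the simulation cost to \cite[Theorem I.2]{chen2023efficient}. The gap is in the cost accounting of your product-formula layer, which the paper neither uses nor needs. Your claim that ``the durations of the $r$ factors associated with a fixed term sum to $t$'', so that each of the $3n$ terms costs $\widetilde{\mathcal{O}}(t)$, implicitly assumes that simulating a local channel $e^{s\mathcal{L}_a^{(\beta)}}$ costs an amount proportional to $s$ with no per-invocation overhead. That is false: every invocation must synthesize the quasi-local jump operators $A^{a,\alpha}(\omega)$ via the operator Fourier transform, which requires Hamiltonian simulation time of order $\beta$ (up to polylogarithmic factors) no matter how small $s$ is, so $r$ invocations per site cost $\widetilde{\Omega}(r)$, not $\widetilde{\mathcal{O}}(t)$. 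Moreover, a first-order product formula with your (correct) extensive commutator bound $\sum_{i<j}\|[\mathcal{L}_i^{(\beta)},\mathcal{L}_j^{(\beta)}]\|=\mathcal{O}(n)$ incurs error of order $nt^2/r$, forcing $r=\Omega(nt^2/\epsilon)$; the total cost then becomes $\widetilde{\Omega}(n^2t^2\epsilon^{-1})$ --- superlinear in $n$ and, worse, polynomial in $1/\epsilon$, contradicting the corollary's claim that $\widetilde{\mathcal{O}}$ hides only polylogarithmic factors in $\epsilon$. This defect is structural, not bookkeeping: any fixed-order Trotter scheme has polynomial precision dependence, so the ``main obstacle'' you defer cannot be closed along this route.

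The repair is to drop the site-by-site splitting altogether. \cite[Theorem I.2]{chen2023efficient} applies directly to the full generator $\mathcal{L}^{(\beta)}=\sum_{a,\alpha}\mathcal{L}_a^{(\beta)}$: all jumps are simulated coherently via block-encoding and amplitude-amplification techniques, with cost effectively linear in $t$ for each of the $3n$ terms and polylogarithmic in the target precision, i.e.\ simulating $e^{t\mathcal{L}^{(\beta)}}$ to diamond-norm error $\epsilon/2$ costs $\widetilde{\mathcal{O}}(nt)$ Hamiltonian simulation time and two-qubit gates. Combining this with $t=\Theta(\log(n/\epsilon))$ from Theorem~\ref{thmrmhighT1} and the triangle inequality you already wrote down yields the stated $\widetilde{\mathcal{O}}(n)$ bounds; this one-line combination is the paper's entire proof of the corollary.
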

We now provide the proof of Theorem \ref{thmrmhighT1} divided in four steps, with some of the technical details placed in \cite{supplemental}. The idea is to use an unconventional norm, the \emph{oscillator norm}, which allows us to extrapolate the rapid mixing property of the depolarizing channel to that of our Gibbs sampler, which can be seen as a perturbation. The strength of that perturbation increases with $\beta$, and is ultimately controlled by locality and the Lieb-Robinson bound. From those considerations the threshold $\beta^*$ follows.
\medskip
\paragraph{The oscillator norm as a proxy:} The trace norm $\norm{\cdot}_1$ is insensitive to local perturbations, and thus a direct analysis of it is inadequate for local-to-global proof ideas often used to prove rapid mixing. This motivates the definition of the so-called \emph{oscillator norm}
\begin{align*}
\vertiii{X}:=\sum_{a\in\Lambda}\|\delta_a(X)\|_\infty\,,
\end{align*}
where $\delta_a(X):=X-\frac{1}{2}I_a\otimes \operatorname{tr}_a(X)$. We aim to prove that, for $\beta< \beta^*$, there exists $\kappa\equiv \kappa(\beta)\in(0,\lambda)$ such that, for any initial observable $X$,
\begin{align}\label{gradientest}
\vertiii{e^{t\mathcal{L}^{(\beta)\dagger}}(X)}\le e^{-(\lambda-\kappa)t}\vertiii{X}\,.
\end{align}
Then, denoting $\rho_t:=e^{t\mathcal{L}^{(\beta)}}(\rho)$ and $X_t:=e^{t\mathcal{L}^{(\beta)\dagger}}(X)$,
\begin{align*}
\|\rho_t-\sigma\|_1&=\sup_{\|X\|_\infty\le 1}\tr{X(\rho_t-\sigma)}\\
&=\sup_{\|X\|_\infty\le 1}\|X_t-2^{-n}\tr{X_t}I\|\|\rho-\sigma\|_1\\
&\le \sup_{\|X\|_\infty\le 1} \vertiii{X_t}\,\|\rho-\sigma\|_1 \\
&\le e^{-(\lambda-\kappa)t}\sup_{\|X\|_\infty\le 1}\vertiii{X}\,\|\rho-\sigma\|_1\\
&\le 4n e^{-(\lambda-\kappa)t}\,,
\end{align*}
where in the second line we used the fact that $\rho_t - \sigma$ is traceless, and the last inequality uses the fact that $\vertiii{X}\le 2n$. Hence, rapid mixing is a consequence of the existence of a constant $0<\kappa<\lambda$ independent of $n$ that guarantees the decay of the oscillator norm. 

\medskip

\paragraph{Local conditions for oscillator norm decay:} The classical analogue of the oscillator norm \eqref{gradientest} has previously been considered in the classical literature on Markov chains. Its decay at high enough temperature can be derived by considering its differential form and identifying a set of sufficient conditions on the local generators of the evolution \cite{Aizenman1987,stroock1992equivalence}. This idea was extended to the commuting Hamiltonian setting in \cite{Majewski1995,temme2015fast}, and here we generalize it to the noncommuting regime. In \cite{supplemental} we show that \eqref{gradientest} holds if the constant $\kappa$ is such that
\begin{align*}
\sum_a\sum_{b\ne a}\kappa_{a,b}^c+\sum_a \gamma_a^c\le \kappa 
\end{align*}
for positive constants $\kappa_{a,b}^c\ge 0$ and $\gamma^c_a\ge 0$ depending on lattice sites $a,b,c$ where, for all $X$ and $b\ne a$, $\kappa_{a,b}^c$ and $\gamma_a^c$ are such that
\begin{align}
&\|[\delta_a, \mathcal{L}_b^{(\beta)\dagger}](X)\|_\infty\le \sum_c \kappa^c_{a,b} \|\delta_c(X)\|_\infty\label{eq:11}\\
&\|\delta_a(\mathcal{L}_a^{(\beta)\dagger}-\mathcal{L}_a^{(0)\dagger})(X)\|_\infty\le \sum_c \gamma_a^c\|\delta_c(X)\|_\infty\,.\label{eq:22}
\end{align}

These conditions are reminiscent of Dobrushin's uniqueness condition in the classical setting \cite{dobrushin1970prescribing}. In \cite{supplemental}, we show that our criteria for rapid mixing are in fact weaker than that condition when restricting to the classical setting.

\paragraph{Local conditions via Lindblad approximations:} In order to analyze the constants  $\kappa_{a,b}^c$ and $\gamma_a^c$, we need to approximate the Lindbladians in space and in $\beta$ as follows. Given a constant $r_0\in\mathbb{N}$ to be chosen later,
\begin{align*}
\sum_{r\ge r_0}\|\mathcal{L}_b^{\beta,r\dagger}\!\!-\mathcal{L}_{b}^{\beta,r-1\dagger }\|_{\infty\to\infty}\!&\le \Delta(r_0)\,,\\
\|\mathcal{L}_b^{(\beta) \dagger}-\mathcal{L}_{b}^{(0) \dagger }\|_{\infty\to\infty}&\le \eta(\beta)\,,
\end{align*}
where $\mathcal{L}_b^{\beta,r\dagger}$ corresponds to the local Lindbladian $\mathcal{L}_b^{(\beta)}$ with associated Hamiltonian $H$ replaced by $H_{B_b(r)}$, i.e.~the sum of interactions supported in a ball of radius $r$ aroud site $b$. Above,
 $\Delta$ is a fast decaying function, whereas $\eta$ is an increasing function of $\beta$ with $\eta(0)=0$. 
 
 Then, in \cite{supplemental}, we show that $\kappa_{a,b}^c$ and $\gamma_a^c$ can be chosen so that, for any $r_0\in\mathbb{N}$ and $c\in\Lambda$, 
 \begin{align}
&\sum_a\sum_{b\ne a}\kappa_{a,b}^c+\sum_a \gamma_a^c \le 4(2r_0+1)^{2D}\,\eta(\beta)\label{boundcrude}+f(r_0),
\end{align}
where $f(r_0)$ is a fast-decaying function depending on $\Delta(l)$ and $D$, whose expression we give in \cite{supplemental}.
Hence, in order to obtain $\kappa < \lambda$, it is enough that the RHS is also smaller than $\lambda$. This means we need to take $r_0$ large enough and choose $\beta$ small enough accordingly.

\paragraph{Spatial and thermal approximations via Lieb-Robinson bounds:} In \cite{supplemental}, we derive the following expressions. Assuming initially that $\beta J\le 1/200$,
\begin{align*}
&\Delta(\ell):= \frac{14 g(\beta J)^{r_0}}{1-g(\beta J)}\,,\\
  &  \eta(\beta) := 2 \beta J \,,
\end{align*}
where $g(x)=\sqrt{x}/(1+\sqrt{x})$.
Plugging these into the bounds derived in \eqref{boundcrude}, $\kappa$ may be chosen as
\begin{align*}
\kappa& \equiv 8(2 r_0+1)^{2D} \beta J +f(r_0),
\end{align*}
and if we choose $r_0=4$ and $2 \beta J\le 1/615^D$, we find that $\kappa$ is strictly less than $\lambda$. For low-dimensional models with large connectivity $k$, this includes lower temperatures than the results of \cite{tang_efficient} (where $\beta^*$ depends on $k^{-2}$).

\emph{Long-range Hamiltonians—} The result above holds as long as the Hamiltonian obeys a strong enough Lieb-Robinson bound. As such, it also extends to long range models, for which these bounds have been proven \cite{Kuwahara2019LongRange,TranLR2021}. Long-range Hamiltonians are such that
\begin{align}
 &  \sup_{i \in \Lambda} \sum_{\substack{X \in \Lambda, i \in X \\ \text{diam}(X)\le r}} \norm{h_{X}}_{\infty} \le g r^{D-\nu}, \\
 &   \sup_{i,j \in \Lambda} \sum_{\{i,j \} \in X} \norm{h_{X}}_{\infty} \le g_0 (\text{dist}(i,j)+1)^{-\nu},
\end{align}
for some constants $g,g_0$. The parameter $\nu$ controls the range of the interactions.

\begin{thm}\label{thm.longrange}
Let $H$ be long range  and such that $\nu>4D+2$. There exists a constant $C$ 
depending solely on $D, g_0$ and $\nu$ such that for $\beta < \beta^*:= C/g $,
\begin{align}\label{equ:mixing_gap}
\|e^{t\mathcal{L}^{(\beta)}}(\rho)-\sigma_\beta\|_1\le \epsilon\quad \text{ for all}\quad t=\Omega(\log(n/\epsilon))\,.
\end{align}

\end{thm}

The proof steps $a-c$ are the same as for the strictly local case, but now in \eqref{boundcrude} in step $d$ we have the functions
\begin{align*}
&\Delta(\ell):= K  (\beta g)^{\frac{D+1}{2}} \frac{1}{r_0^{\nu -2D-2 }} , \\ 
  &  \eta(\beta) := 2 \beta g \,,
\end{align*}
where the constant $K$ depends on the parameters of the Lieb-Robinson bound \cite{Kuwahara2019LongRange} (see \cite{supplemental} for details). The condition $\nu > 4D+2 $ then guarantees that the sum inside $f(r_0)$ in \eqref{boundcrude} is finite and decays with $r_0$, which allows us to pick a numerical constant $C$ such that $\kappa<\lambda$ as required. Notice that the range of interactions allowed is narrower than the Lieb-Robinson bound, which holds as long as $\nu > 2D$ \cite{LR_Anthony_Chen_2023}. This is because of the overheads in the definition of $\kappa$, which contain a triple sum of the error term of the Lieb-Robinson bound. 

\emph{Comparison with previous work—} Until recently, results for quantum algorithms preparing Gibbs states of general Hamiltonians only guaranteed convergence in exponential time \cite{poulin_sampling_2009,Temme_2011,chowdhury2016quantum,shtanko2021algorithms,Holmes_2022,zhang2023dissipative}. Some approaches focused on modelling aspects of the thermalization process \cite{chen2023fast,LiWang2023,Rall_2023}, and others resorted to variational approaches \cite{chowdhury_variational_2020,wang_variational_2021} or the notion of quantum Markov states \cite{brandao2019finite}. Polynomial-time algorithms were constrained to 1D systems \cite{bilgin_preparing_2010} or commuting Hamiltonians \cite{GeMolnar2016,kastoryano2016quantum,bardet2023rapid,bardet2024entropy}, for which there has also been further recent progress \cite{ding2024polynomial,hwang2024gibbscommuting}. Also, \cite{brandao2019finite} gave a scheme based on the Petz map with a superpolynomial scaling in $D>1$ under the assumption of decay of correlations and a fast decay of the quantum conditional mutual information \cite{kuwahara2024clustering}.

To our knowledge, Theorems \ref{thmrmhighT1} and \ref{thm.longrange}, together with the algorithm from \cite{chen2023efficient}, represent the first method that enables quasi-linear preparation for general non-commuting models, albeit restricted to high temperatures. Previously, the authors had derived quasi-quadratic bounds on the runtime of the same Gibbs samplers by means of spectral gap estimates
 in \cite{rouze2024efficient}, which is a strictly weaker property than rapid mixing.

In a recent work \cite{tang_efficient}, the authors derived an efficient Gibbs sampling algorithm for short-range systems over arbitrary hypergraphs at high temperature (roughly with $\beta^* =\frac{1}{100J}$ in our notation). However their algorithm calls a classical subroutine with runtime scaling as $n^{7+\mathcal{O}(\log(D))}$ for $D$-dimensional lattice systems. Additionally, the temperature range within which our dissipative preparation is efficient is larger than theirs for low-dimensional, highly connected systems. 
 
 Crucially, the result of \cite{tang_efficient} makes use of the fact that high-temperature Gibbs states are separable, a phenomenon previously anticipated and known as ``sudden death of entanglement'' \cite{Sherman_2016}. It remains an intriguing  problem to show a separation between the temperature at which the Gibbs state becomes unentangled and that at which dissipative preparation becomes inefficient. Another open question is whether that sudden death of entanglement occurs for long-range Hamiltonians. 
 
 


\emph{Approximating quantum partition functions—}
We now discuss the application of our Gibbs sampling results to estimate partition functions of high-temperature quantum Gibbs states to an $\epsilon$ multiplicative error, where the estimate $Z'_\beta$ is such that
$\vert Z'_\beta - Z_\beta \vert \le  \epsilon Z_\beta$.

A standard class of classical algorithms for partition function estimation relies on an annealing schedule of inverse temperatures, together with the ability to sample from Gibbs states to estimate the partition function via telescopic products \cite{Stefankovic2007,pmlr-v75-kolmogorov18a,Huber2015,Bezkov2008}. Previous results~\cite{Wocjan2008,poulin_sampling_2009,Montanaro2015,hamoudi_magniez,Harrow2020,Arunachalam2022,Cornelissen2023} generalized this approach to the quantum setting, and here we give a more modern exposition of the algorithm. Assuming access to a block encoding of $\frac{H}{\|H\|_\infty}$ and Hamiltonian simulation time for $H$, we show how to leverage our Gibbs preparation results to estimate partition functions.
The procedure is explained in detail in \cite{supplemental}, but roughly works as follows: we use our efficient Gibbs sampler to prepare a sequence of Gibbs states $\sigma_{\beta_1},\ldots,\sigma_{\beta_l}$, such that $\beta_1=0$ and $\beta_l=\beta$ is our target inverse temperature. Given these states, we then use quantum signal processing techniques and the block encoding of $H$ to obtain an approximate block encoding of $e^{-\tfrac{1}{2}(\beta_{i+1}-\beta_i)H}$ on $\sigma_{\beta_i}$. Applying this to the Gibbs state $\sigma_{\beta_i}$ with an auxilliary qubit, we are able to sample from a random variable with expectation value $Z_{\beta_{i+1}}/Z_{\beta_{i}}$ and bounded variance.
By picking the $|\beta_i-\beta_{i+1}|=\Theta(n^{-1})$ and drawing $\mathcal{O}(n\epsilon^{-2})$ samples for each temperature pair, we can estimate $Z_\beta$ to precision $\epsilon$ with probability of success at least $3/4$. This is simply done by taking the product of the estimates of $Z_{\beta_{i+1}}/Z_{\beta_{i}}$, and applying the results of~\cite{dyer1991computing} to bound the error. 
\begin{thm}
For a short-range lattice Hamiltonian $H$ and $\beta\leq\beta^*$ as in Theorem~\ref{thmrmhighT1} we can estimate the partition function $Z_{\beta}:=\tr{e^{-\beta H}}$ up to a relative error $\epsilon>0$ with success probability at least $3/4$ and runtime
\begin{align}
\widetilde{\mathcal{O}}(n^3\epsilon^{-2}),
\end{align}
where $\widetilde{\mathcal{O}}$ hides polylogarithmic terms in $\epsilon$, and $n$. For long range Hamiltonians as in Theorem~\ref{thm.longrange} we can solve the same task with runtime $\widetilde{\mathcal{O}}(n^4\epsilon^{-2})$.
\end{thm}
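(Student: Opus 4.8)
The plan is to reduce the estimation of $Z_\beta$ to a telescoping product of ratios of nearby partition functions, each expressible as a Gibbs-state expectation and estimable by a short quantum circuit. Fixing an annealing schedule $0=\beta_1<\dots<\beta_l=\beta$ with spacings $\Delta_i:=\beta_{i+1}-\beta_i=\Theta(n^{-1})$, I would write
\begin{align*}
Z_\beta=Z_0\prod_{i=1}^{l-1}\frac{Z_{\beta_{i+1}}}{Z_{\beta_i}},\qquad
\frac{Z_{\beta_{i+1}}}{Z_{\beta_i}}=\tr{e^{-\Delta_i H}\sigma_{\beta_i}}=:r_i,
\end{align*}
with $Z_0=\tr{\one}=2^n$ known exactly. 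Since $\beta\le\beta^*$ is a constant while $\|H\|_\infty=\Theta(n)$, the schedule has $l=\Theta(n)$ steps and, crucially, $\Delta_i\|H\|_\infty=\Theta(1)$, which keeps every $r_i$ of order one.

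Next I would build an unbiased estimator of each $r_i$. Using the block encoding of $H/\|H\|_\infty$ together with quantum signal processing, I construct an approximate block encoding of the contraction $c_i\,e^{-\frac12\Delta_i H}$, where $c_i:=e^{-\frac12\Delta_i\|H\|_\infty}\le 1$; approximating $e^{-x}$ on the relevant spectral window to error $\delta'$ needs a polynomial of degree $\widetilde{\mathcal{O}}(\Delta_i\|H\|_\infty+\log(1/\delta'))=\widetilde{\mathcal{O}}(1)$. Preparing $\sigma_{\beta_i}$ on the system register with the rapid-mixing Gibbs sampler (the Corollary above, giving trace-distance error $\delta$ in $\widetilde{\mathcal{O}}(n)$ time), applying this block encoding with one flag qubit, and measuring that flag qubit, I obtain a Bernoulli variable with success probability
\begin{align*}
p_i=c_i^2\,\tr{e^{-\Delta_i H}\sigma_{\beta_i}}=c_i^2\,r_i .
\end{align*}
Because $c_i^2=e^{-\Delta_i\|H\|_\infty}=\Theta(1)$ and $r_i\in\big[e^{-\Delta_i\|H\|_\infty},e^{\Delta_i\|H\|_\infty}\big]=\Theta(1)$, each $p_i$ is bounded away from $0$; the rescaled outcome $\hat r_i:=\hat p_i/c_i^2$ is then an unbiased estimator of $r_i$ whose relative variance from $N_i$ repetitions is $(1-p_i)/(p_i N_i)=\mathcal{O}(1/N_i)$.

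The third step is the error analysis of the product. As the $\hat r_i$ are independent, the relative variance of $\hat Z:=Z_0\prod_i\hat r_i$ equals $\prod_i\big(1+\mathcal{O}(1/N_i)\big)-1$, which I bound following the telescoping-product argument of~\cite{dyer1991computing}: choosing $N_i=\Theta(n\epsilon^{-2})$ makes it $\mathcal{O}(l/N_i)=\mathcal{O}(\epsilon^2)$, so Chebyshev, boosted by a median-of-means to reach success $3/4$, gives relative error $\epsilon$. Simultaneously I track the two biases. The QSP truncation $\delta'$ enters with only $\widetilde{\mathcal{O}}(\log(1/\delta'))$ overhead. The approximate states $\tilde\sigma_{\beta_i}$ perturb each ratio by $\|e^{-\Delta_i H}\|_\infty\|\tilde\sigma_{\beta_i}-\sigma_{\beta_i}\|_1=\Theta(1)\,\delta$, shifting $\log\hat Z$ by $\mathcal{O}(l\delta)$, so $\delta=\Theta(\epsilon/n)$ keeps the total bias $\mathcal{O}(\epsilon)$ at the price of only a $\log(n/\epsilon)$ factor in each mixing time.

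Collecting costs yields the claim: there are $l=\Theta(n)$ factors, each estimated from $N_i=\Theta(n\epsilon^{-2})$ samples, and each sample costs one Gibbs-state preparation plus a $\widetilde{\mathcal{O}}(1)$-depth QSP circuit. For short-range $H$, Theorem~\ref{thmrmhighT1} makes the preparation $\widetilde{\mathcal{O}}(n)$, for a total of $\widetilde{\mathcal{O}}(n^3\epsilon^{-2})$; for long-range $H$ the preparation and Hamiltonian simulation carry an extra factor of $n$, giving $\widetilde{\mathcal{O}}(n^4\epsilon^{-2})$ under Theorem~\ref{thm.longrange}. I expect the main obstacle to be controlling statistical fluctuation and bias jointly across the $\Theta(n)$ multiplicative factors: one must confirm that the schedule keeps every $p_i=\Theta(1)$ (so that relative variances add rather than compound catastrophically) while forcing the per-step Gibbs-preparation and QSP errors down to $\mathcal{O}(\epsilon/n)$, and verifying that these competing demands cost only polylogarithmic overhead is the delicate part.
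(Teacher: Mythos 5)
Your proposal is correct and follows essentially the same route as the paper: a telescoping product over a uniform annealing schedule of length $l=\Theta(n)$ with spacing $\Theta(1/\|H\|_\infty)$, Bernoulli estimators obtained by applying a QSP block encoding of $e^{-\frac{1}{2}\Delta_i H}$ to the rapidly-mixed Gibbs state and measuring a flag qubit, a Dyer--Frieze-type product-variance analysis with $\Theta(n\epsilon^{-2})$ samples per step, and the same cost accounting giving $\widetilde{\mathcal{O}}(n^3\epsilon^{-2})$ (short range) and $\widetilde{\mathcal{O}}(n^4\epsilon^{-2})$ (long range). The only cosmetic differences (the subnormalization constant $c_i$ to handle a non-positive-semidefinite $H$, and Chebyshev plus median-of-means in place of the paper's direct use of the Dyer--Frieze lemma) do not change the argument.
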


We believe that it should be possible to remove another $n$ factor from the complexity of our algorithm by considering adaptive annealing schedules.
The reason for that comes from the literature on the topic of designing annealing schedules for \emph{classical systems}~\cite{Arunachalam2022,Stefankovic2007,Harrow2020,Huber2015,pmlr-v75-kolmogorov18a,Montanaro2015,Bezkov2008,hamoudi_magniez,Cornelissen2023}. In these works, the authors give algorithms to obtain good adaptive annealing schedules with $l=\mathcal{O}(\sqrt{n})$, which would yield the speedup of order $n$ in our result. However, to the best of our knowledge, most works assume that $H$ is classical and its spectrum only takes integer values. As remarked in~\cite[Footnote 1]{Arunachalam2022}, ``this assumption is common in the literature and can often be relaxed''. We leave it for future work to generalize these results to obtain annealing schedules for quantum Hamiltonians with arbitrary spectrum. Furthermore, to close the gap between local and long-range models, it would be interesting to develop more tailored simulation results for the Lindbladians related to long-range Hamiltonians. Although the Hamiltonian simulation results we use from~\cite{Haah_2021} were extended to the long range setting in~\cite{TranLong19}, the polynomial scaling in precision of the results makes them less efficient than block-encoding methods for long-range systems for our application.

Note that for the algorithm  above, the runtime is at most cubic in system size and the dependency on the dimension of the lattice and other parameters such as temperature appears only as a prefactor.
This should be compared to other (classical) methods to compute the partition function~\cite{Mann2021,Harrow2020b,Haah_2024}. These achieve polynomial runtimes in $n,\epsilon^{-1}$, but have a higher polynomial dependency  for the same task in the high-temperature regime. In particular, the polynomial degree of $n,\epsilon^{-1}$ in the runtime of those algorithms usually depends on the dimension of the underlying lattice or on the temperature. Thus, we see that our methods give a polynomial quantum speedup for this task, although in a more restricted temperature range. The only regime where we achieve a super-polynomial speed-up over classical methods is for systems with long-range interactions, for which the best known rigorous runtimes at high temperatures are only $\text{exp}({\log^2(n/\epsilon)})$\cite{sanchezsegovia2025hightemperature}.

Several papers also studied quantum speedups for partition functions assuming access to quantum samples of the classical Gibbs state~\cite{Montanaro2015,hamoudi_magniez,Arunachalam2022,Cornelissen2023}, i.e. a state of the form 
\begin{align}
\ket{\psi}=\sum\limits_{x\in\{0,1\}^n}\sqrt{\frac{e^{-\beta H(x)}}{Z_\beta}}\ket{x},
\end{align}
or to a unitary preparing it, where $H$ is a classical Hamiltonian. Under this assumption, these works construct quantum algorithms that obtain a Heisenberg scaling for estimating the partition function, i.e. the complexity in precision scales like $\epsilon^{-1}$ compared to the usual $\epsilon^{-2}$ of classical algorithms.
In principle, it should be possible to adapt the quantum algorithms of~\cite{Arunachalam2022}, which are designed to work with classical Hamiltonians, to work given access to a purified Gibbs state, sometimes also called a thermofield double. However, our techniques currently do not yield preparation procedures for the purified Gibbs with sublinear scaling with $\epsilon$, defeating the purpose of obtaining a Heisenberg scaling. 

\emph{Conclusion—}
We established the first rapid mixing results for quantum Gibbs states of non-commuting lattice Hamiltonians at high temperature, yielding essentially optimal mixing times, and with explicit constants for the threshold temperature. Furthermore, our results are the first to also apply to long range systems. We then leveraged these findings to obtain efficient quantum algorithms to estimate the partition function of these Gibbs states. For the case of short range Hamiltonians, these give large polynomial speedups when compared with existing classical methods, and for long range systems they give the first provably efficient schemes. As such, our results substantially advance our understanding on the complexity of simulating high temperature quantum Gibbs states. Finally, we expect that our methods also extend to further families of KMS Lindbladians, such as \cite{ding2024efficientquantumgibbssamplers,scandi2025thermalizationopenmanybodysystems,ding2025endtoendefficientquantumthermal}.

\emph{Acknowledgements—} 
CR acknowledges financial support from the ANR project QTraj (ANR-20-CE40-0024-01) of the French National Research Agency (ANR). DSF acknowledges funding from the European Union under Grant Agreement 101080142 and the project EQUALITY and from the Novo Nordisk
Foundation (Grant No. NNF20OC0059939 Quantum for Life). AMA acknowledges support from the Spanish Agencia Estatal de Investigacion through the grants ``IFT Centro de Excelencia Severo Ochoa CEX2020-001007-S" and ``Ram\'on y Cajal RyC2021-031610-I'', financed by MCIN/AEI/10.13039/501100011033 and the European Union NextGenerationEU/PRTR. This
project was funded within the QuantERA II Programme that has received funding from the EU’s H2020 research and innovation programme under the GA No 101017733.

\bibliographystyle{apsrev4-2}
\bibliography{references}

\begin{thebibliography}{74}%
\makeatletter
\providecommand \@ifxundefined [1]{%
 \@ifx{#1\undefined}
}%
\providecommand \@ifnum [1]{%
 \ifnum #1\expandafter \@firstoftwo
 \else \expandafter \@secondoftwo
 \fi
}%
\providecommand \@ifx [1]{%
 \ifx #1\expandafter \@firstoftwo
 \else \expandafter \@secondoftwo
 \fi
}%
\providecommand \natexlab [1]{#1}%
\providecommand \enquote  [1]{``#1''}%
\providecommand \bibnamefont  [1]{#1}%
\providecommand \bibfnamefont [1]{#1}%
\providecommand \citenamefont [1]{#1}%
\providecommand \href@noop [0]{\@secondoftwo}%
\providecommand \href [0]{\begingroup \@sanitize@url \@href}%
\providecommand \@href[1]{\@@startlink{#1}\@@href}%
\providecommand \@@href[1]{\endgroup#1\@@endlink}%
\providecommand \@sanitize@url [0]{\catcode `\\12\catcode `\$12\catcode
  `\&12\catcode `\#12\catcode `\^12\catcode `\_12\catcode `\%12\relax}%
\providecommand \@@startlink[1]{}%
\providecommand \@@endlink[0]{}%
\providecommand \url  [0]{\begingroup\@sanitize@url \@url }%
\providecommand \@url [1]{\endgroup\@href {#1}{\urlprefix }}%
\providecommand \urlprefix  [0]{URL }%
\providecommand \Eprint [0]{\href }%
\providecommand \doibase [0]{https://doi.org/}%
\providecommand \selectlanguage [0]{\@gobble}%
\providecommand \bibinfo  [0]{\@secondoftwo}%
\providecommand \bibfield  [0]{\@secondoftwo}%
\providecommand \translation [1]{[#1]}%
\providecommand \BibitemOpen [0]{}%
\providecommand \bibitemStop [0]{}%
\providecommand \bibitemNoStop [0]{.\EOS\space}%
\providecommand \EOS [0]{\spacefactor3000\relax}%
\providecommand \BibitemShut  [1]{\csname bibitem#1\endcsname}%
\let\auto@bib@innerbib\@empty
\bibitem [{\citenamefont {Rivas}\ and\ \citenamefont
  {Huelga}(2012)}]{Rivas_2012}%
  \BibitemOpen
  \bibfield  {author} {\bibinfo {author} {\bibfnamefont {A.}~\bibnamefont
  {Rivas}}\ and\ \bibinfo {author} {\bibfnamefont {S.~F.}\ \bibnamefont
  {Huelga}},\ }\href {https://doi.org/10.1007/978-3-642-23354-8} {\emph
  {\bibinfo {title} {Open Quantum Systems: An Introduction}}}\ (\bibinfo
  {publisher} {Springer Berlin Heidelberg},\ \bibinfo {year}
  {2012})\BibitemShut {NoStop}%
\bibitem [{\citenamefont {Taj}\ and\ \citenamefont {Rossi}(2008)}]{Taj_2008}%
  \BibitemOpen
  \bibfield  {author} {\bibinfo {author} {\bibfnamefont {D.}~\bibnamefont
  {Taj}}\ and\ \bibinfo {author} {\bibfnamefont {F.}~\bibnamefont {Rossi}},\
  }\bibfield  {journal} {\bibinfo  {journal} {Physical Review A}\ }\textbf
  {\bibinfo {volume} {78}},\ \href {https://doi.org/10.1103/physreva.78.052113}
  {10.1103/physreva.78.052113} (\bibinfo {year} {2008})\BibitemShut {NoStop}%
\bibitem [{\citenamefont {Mozgunov}\ and\ \citenamefont
  {Lidar}(2020)}]{Mozgunov_2020}%
  \BibitemOpen
  \bibfield  {author} {\bibinfo {author} {\bibfnamefont {E.}~\bibnamefont
  {Mozgunov}}\ and\ \bibinfo {author} {\bibfnamefont {D.}~\bibnamefont
  {Lidar}},\ }\href {https://doi.org/10.22331/q-2020-02-06-227} {\bibfield
  {journal} {\bibinfo  {journal} {Quantum}\ }\textbf {\bibinfo {volume} {4}},\
  \bibinfo {pages} {227} (\bibinfo {year} {2020})}\BibitemShut {NoStop}%
\bibitem [{\citenamefont {Nathan}\ and\ \citenamefont
  {Rudner}(2020)}]{Nathan_2020}%
  \BibitemOpen
  \bibfield  {author} {\bibinfo {author} {\bibfnamefont {F.}~\bibnamefont
  {Nathan}}\ and\ \bibinfo {author} {\bibfnamefont {M.~S.}\ \bibnamefont
  {Rudner}},\ }\bibfield  {journal} {\bibinfo  {journal} {Physical Review B}\
  }\textbf {\bibinfo {volume} {102}},\ \href
  {https://doi.org/10.1103/physrevb.102.115109} {10.1103/physrevb.102.115109}
  (\bibinfo {year} {2020})\BibitemShut {NoStop}%
\bibitem [{\citenamefont {Chen}\ \emph
  {et~al.}(2023{\natexlab{a}})\citenamefont {Chen}, \citenamefont {Kastoryano},
  \citenamefont {Brandão},\ and\ \citenamefont
  {Gilyén}}]{chen2023quantumthermalstatepreparation}%
  \BibitemOpen
  \bibfield  {author} {\bibinfo {author} {\bibfnamefont {C.-F.}\ \bibnamefont
  {Chen}}, \bibinfo {author} {\bibfnamefont {M.~J.}\ \bibnamefont
  {Kastoryano}}, \bibinfo {author} {\bibfnamefont {F.~G. S.~L.}\ \bibnamefont
  {Brandão}},\ and\ \bibinfo {author} {\bibfnamefont {A.}~\bibnamefont
  {Gilyén}},\ }\href {https://arxiv.org/abs/2303.18224} {\bibinfo {title}
  {Quantum thermal state preparation}} (\bibinfo {year} {2023}{\natexlab{a}}),\
  \Eprint {https://arxiv.org/abs/2303.18224} {arXiv:2303.18224 [quant-ph]}
  \BibitemShut {NoStop}%
\bibitem [{\citenamefont {Chen}\ \emph
  {et~al.}(2023{\natexlab{b}})\citenamefont {Chen}, \citenamefont
  {Kastoryano},\ and\ \citenamefont {Gilyén}}]{chen2023efficient}%
  \BibitemOpen
  \bibfield  {author} {\bibinfo {author} {\bibfnamefont {C.-F.}\ \bibnamefont
  {Chen}}, \bibinfo {author} {\bibfnamefont {M.~J.}\ \bibnamefont
  {Kastoryano}},\ and\ \bibinfo {author} {\bibfnamefont {A.}~\bibnamefont
  {Gilyén}},\ }\href@noop {} {\bibinfo {title} {An efficient and exact
  noncommutative quantum gibbs sampler}} (\bibinfo {year}
  {2023}{\natexlab{b}}),\ \Eprint {https://arxiv.org/abs/2311.09207}
  {arXiv:2311.09207 [quant-ph]} \BibitemShut {NoStop}%
\bibitem [{\citenamefont {Ding}\ \emph
  {et~al.}(2025{\natexlab{a}})\citenamefont {Ding}, \citenamefont {Li},\ and\
  \citenamefont {Lin}}]{ding2024efficientquantumgibbssamplers}%
  \BibitemOpen
  \bibfield  {author} {\bibinfo {author} {\bibfnamefont {Z.}~\bibnamefont
  {Ding}}, \bibinfo {author} {\bibfnamefont {B.}~\bibnamefont {Li}},\ and\
  \bibinfo {author} {\bibfnamefont {L.}~\bibnamefont {Lin}},\ }\bibfield
  {journal} {\bibinfo  {journal} {Communications in Mathematical Physics}\
  }\textbf {\bibinfo {volume} {406}},\ \href
  {https://doi.org/10.1007/s00220-025-05235-3} {10.1007/s00220-025-05235-3}
  (\bibinfo {year} {2025}{\natexlab{a}})\BibitemShut {NoStop}%
\bibitem [{\citenamefont {Gilyén}\ \emph {et~al.}(2024)\citenamefont
  {Gilyén}, \citenamefont {Chen}, \citenamefont {Doriguello},\ and\
  \citenamefont
  {Kastoryano}}]{gilyen2024quantumgeneralizationsglaubermetropolis}%
  \BibitemOpen
  \bibfield  {author} {\bibinfo {author} {\bibfnamefont {A.}~\bibnamefont
  {Gilyén}}, \bibinfo {author} {\bibfnamefont {C.-F.}\ \bibnamefont {Chen}},
  \bibinfo {author} {\bibfnamefont {J.~F.}\ \bibnamefont {Doriguello}},\ and\
  \bibinfo {author} {\bibfnamefont {M.~J.}\ \bibnamefont {Kastoryano}},\ }\href
  {https://arxiv.org/abs/2405.20322} {\bibinfo {title} {Quantum generalizations
  of glauber and metropolis dynamics}} (\bibinfo {year} {2024}),\ \Eprint
  {https://arxiv.org/abs/2405.20322} {arXiv:2405.20322 [quant-ph]} \BibitemShut
  {NoStop}%
\bibitem [{\citenamefont {Kastoryano}\ \emph {et~al.}(2025)\citenamefont
  {Kastoryano}, \citenamefont {Kristensen}, \citenamefont {Chen},\ and\
  \citenamefont {Gilyén}}]{kastoryano2024littlebitselfcorrection}%
  \BibitemOpen
  \bibfield  {author} {\bibinfo {author} {\bibfnamefont {M.~J.}\ \bibnamefont
  {Kastoryano}}, \bibinfo {author} {\bibfnamefont {L.~B.}\ \bibnamefont
  {Kristensen}}, \bibinfo {author} {\bibfnamefont {C.-F.}\ \bibnamefont
  {Chen}},\ and\ \bibinfo {author} {\bibfnamefont {A.}~\bibnamefont
  {Gilyén}},\ }\href {https://doi.org/10.22331/q-2025-08-04-1820} {\bibfield
  {journal} {\bibinfo  {journal} {Quantum}\ }\textbf {\bibinfo {volume} {9}},\
  \bibinfo {pages} {1820} (\bibinfo {year} {2025})}\BibitemShut {NoStop}%
\bibitem [{\citenamefont {D'Abbruzzo}\ \emph {et~al.}(2024)\citenamefont
  {D'Abbruzzo}, \citenamefont {Rossini}, \citenamefont {Giovannetti},\ and\
  \citenamefont {Alba}}]{dabbruzzo2024steadystateentanglementscalingopen}%
  \BibitemOpen
  \bibfield  {author} {\bibinfo {author} {\bibfnamefont {A.}~\bibnamefont
  {D'Abbruzzo}}, \bibinfo {author} {\bibfnamefont {D.}~\bibnamefont {Rossini}},
  \bibinfo {author} {\bibfnamefont {V.}~\bibnamefont {Giovannetti}},\ and\
  \bibinfo {author} {\bibfnamefont {V.}~\bibnamefont {Alba}},\ }\href
  {https://arxiv.org/abs/2409.06326} {\bibinfo {title} {Steady-state
  entanglement scaling in open quantum systems: A comparison between several
  master equations}} (\bibinfo {year} {2024}),\ \Eprint
  {https://arxiv.org/abs/2409.06326} {arXiv:2409.06326 [quant-ph]} \BibitemShut
  {NoStop}%
\bibitem [{\citenamefont {Levin}\ and\ \citenamefont
  {Peres}(2017)}]{levin2017markov}%
  \BibitemOpen
  \bibfield  {author} {\bibinfo {author} {\bibfnamefont {D.~A.}\ \bibnamefont
  {Levin}}\ and\ \bibinfo {author} {\bibfnamefont {Y.}~\bibnamefont {Peres}},\
  }\href@noop {} {\emph {\bibinfo {title} {Markov chains and mixing times}}},\
  Vol.\ \bibinfo {volume} {107}\ (\bibinfo  {publisher} {American Mathematical
  Soc.},\ \bibinfo {year} {2017})\BibitemShut {NoStop}%
\bibitem [{\citenamefont {Martinelli}(1999)}]{martinelli1999lectures}%
  \BibitemOpen
  \bibfield  {author} {\bibinfo {author} {\bibfnamefont {F.}~\bibnamefont
  {Martinelli}},\ }\href@noop {} {\bibfield  {journal} {\bibinfo  {journal}
  {Lectures on probability theory and statistics (Saint-Flour, 1997)}\ }\textbf
  {\bibinfo {volume} {1717}},\ \bibinfo {pages} {93} (\bibinfo {year}
  {1999})}\BibitemShut {NoStop}%
\bibitem [{\citenamefont {Kastoryano}\ and\ \citenamefont
  {Brand{\~a}o}(2016)}]{kastoryano2016quantum}%
  \BibitemOpen
  \bibfield  {author} {\bibinfo {author} {\bibfnamefont {M.~J.}\ \bibnamefont
  {Kastoryano}}\ and\ \bibinfo {author} {\bibfnamefont {F.~G. S.~L.}\
  \bibnamefont {Brand{\~a}o}},\ }\href
  {https://doi.org/10.1007/s00220-016-2641-8} {\bibfield  {journal} {\bibinfo
  {journal} {Communications in Mathematical Physics}\ }\textbf {\bibinfo
  {volume} {344}},\ \bibinfo {pages} {915} (\bibinfo {year}
  {2016})}\BibitemShut {NoStop}%
\bibitem [{\citenamefont {Bardet}\ \emph {et~al.}(2023)\citenamefont {Bardet},
  \citenamefont {Capel}, \citenamefont {Gao}, \citenamefont {Lucia},
  \citenamefont {P{\'{e}}rez-Garc{\'{\i}}a},\ and\ \citenamefont
  {Rouz{\'{e}}}}]{bardet2023rapid}%
  \BibitemOpen
  \bibfield  {author} {\bibinfo {author} {\bibfnamefont {I.}~\bibnamefont
  {Bardet}}, \bibinfo {author} {\bibfnamefont {{\'{A} }.}~\bibnamefont
  {Capel}}, \bibinfo {author} {\bibfnamefont {L.}~\bibnamefont {Gao}}, \bibinfo
  {author} {\bibfnamefont {A.}~\bibnamefont {Lucia}}, \bibinfo {author}
  {\bibfnamefont {D.}~\bibnamefont {P{\'{e}}rez-Garc{\'{\i}}a}},\ and\ \bibinfo
  {author} {\bibfnamefont {C.}~\bibnamefont {Rouz{\'{e}}}},\ }\bibfield
  {journal} {\bibinfo  {journal} {Physical Review Letters}\ }\textbf {\bibinfo
  {volume} {130}},\ \href {https://doi.org/10.1103/physrevlett.130.060401}
  {10.1103/physrevlett.130.060401} (\bibinfo {year} {2023})\BibitemShut
  {NoStop}%
\bibitem [{\citenamefont {Bardet}\ \emph {et~al.}(2024)\citenamefont {Bardet},
  \citenamefont {Capel}, \citenamefont {Gao}, \citenamefont {Lucia},
  \citenamefont {P{\'e}rez-Garc{\'i}a},\ and\ \citenamefont
  {Rouz{\'e}}}]{bardet2024entropy}%
  \BibitemOpen
  \bibfield  {author} {\bibinfo {author} {\bibfnamefont {I.}~\bibnamefont
  {Bardet}}, \bibinfo {author} {\bibfnamefont {{\'A}.}~\bibnamefont {Capel}},
  \bibinfo {author} {\bibfnamefont {L.}~\bibnamefont {Gao}}, \bibinfo {author}
  {\bibfnamefont {A.}~\bibnamefont {Lucia}}, \bibinfo {author} {\bibfnamefont
  {D.}~\bibnamefont {P{\'e}rez-Garc{\'i}a}},\ and\ \bibinfo {author}
  {\bibfnamefont {C.}~\bibnamefont {Rouz{\'e}}},\ }\href
  {https://doi.org/10.1007/s00220-023-04869-5} {\bibfield  {journal} {\bibinfo
  {journal} {Communications in Mathematical Physics}\ }\textbf {\bibinfo
  {volume} {405}},\ \bibinfo {pages} {42} (\bibinfo {year} {2024})}\BibitemShut
  {NoStop}%
\bibitem [{\citenamefont {Müller-Hermes}\ \emph {et~al.}(2016)\citenamefont
  {Müller-Hermes}, \citenamefont {Stilck~França},\ and\ \citenamefont
  {Wolf}}]{depolarizing2016}%
  \BibitemOpen
  \bibfield  {author} {\bibinfo {author} {\bibfnamefont {A.}~\bibnamefont
  {Müller-Hermes}}, \bibinfo {author} {\bibfnamefont {D.}~\bibnamefont
  {Stilck~França}},\ and\ \bibinfo {author} {\bibfnamefont {M.~M.}\
  \bibnamefont {Wolf}},\ }\href {https://doi.org/10.1063/1.4939560} {\bibfield
  {journal} {\bibinfo  {journal} {Journal of Mathematical Physics}\ }\textbf
  {\bibinfo {volume} {57}},\ \bibinfo {pages} {022202} (\bibinfo {year}
  {2016})},\ \Eprint
  {https://arxiv.org/abs/https://pubs.aip.org/aip/jmp/article-pdf/doi/10.1063/1.4939560/16721492/022202\_1\_online.pdf}
  {https://pubs.aip.org/aip/jmp/article-pdf/doi/10.1063/1.4939560/16721492/022202\_1\_online.pdf}
  \BibitemShut {NoStop}%
\bibitem [{\citenamefont {Majewski}\ and\ \citenamefont
  {Zegarlinski}(1995)}]{Majewski1995}%
  \BibitemOpen
  \bibfield  {author} {\bibinfo {author} {\bibfnamefont {A.~W.}\ \bibnamefont
  {Majewski}}\ and\ \bibinfo {author} {\bibfnamefont {B.}~\bibnamefont
  {Zegarlinski}},\ }\href {http://eudml.org/doc/222658} {\bibfield  {journal}
  {\bibinfo  {journal} {Mathematical Physics Electronic Journal [electronic
  only]}\ }\textbf {\bibinfo {volume} {1}} (\bibinfo {year}
  {1995})}\BibitemShut {NoStop}%
\bibitem [{\citenamefont {Temme}\ and\ \citenamefont
  {Kastoryano}(2015)}]{temme2015fast}%
  \BibitemOpen
  \bibfield  {author} {\bibinfo {author} {\bibfnamefont {K.}~\bibnamefont
  {Temme}}\ and\ \bibinfo {author} {\bibfnamefont {M.~J.}\ \bibnamefont
  {Kastoryano}},\ }\href@noop {} {\bibfield  {journal} {\bibinfo  {journal}
  {arXiv preprint arXiv:1505.07811}\ } (\bibinfo {year} {2015})}\BibitemShut
  {NoStop}%
\bibitem [{\citenamefont {Kochanowski}\ \emph {et~al.}(2025)\citenamefont
  {Kochanowski}, \citenamefont {Alhambra}, \citenamefont {Capel},\ and\
  \citenamefont
  {Rouz{\'e}}}]{kochanowski2024rapidthermalizationdissipativemanybody}%
  \BibitemOpen
  \bibfield  {author} {\bibinfo {author} {\bibfnamefont {J.}~\bibnamefont
  {Kochanowski}}, \bibinfo {author} {\bibfnamefont {{\'A}.~M.}\ \bibnamefont
  {Alhambra}}, \bibinfo {author} {\bibfnamefont {{\'A}.}~\bibnamefont
  {Capel}},\ and\ \bibinfo {author} {\bibfnamefont {C.}~\bibnamefont
  {Rouz{\'e}}},\ }\href {https://doi.org/10.1007/s00220-025-05353-y} {\bibfield
   {journal} {\bibinfo  {journal} {Communications in Mathematical Physics}\
  }\textbf {\bibinfo {volume} {406}},\ \bibinfo {pages} {176} (\bibinfo {year}
  {2025})}\BibitemShut {NoStop}%
\bibitem [{\citenamefont {Rouz{\'e}}\ \emph {et~al.}(2024)\citenamefont
  {Rouz{\'e}}, \citenamefont {Fran{\c{c}}a},\ and\ \citenamefont
  {Alhambra}}]{rouze2024efficient}%
  \BibitemOpen
  \bibfield  {author} {\bibinfo {author} {\bibfnamefont {C.}~\bibnamefont
  {Rouz{\'e}}}, \bibinfo {author} {\bibfnamefont {D.~S.}\ \bibnamefont
  {Fran{\c{c}}a}},\ and\ \bibinfo {author} {\bibfnamefont {{\'A}.~M.}\
  \bibnamefont {Alhambra}},\ }\href@noop {} {\bibfield  {journal} {\bibinfo
  {journal} {arXiv preprint arXiv:2403.12691}\ } (\bibinfo {year}
  {2024})}\BibitemShut {NoStop}%
\bibitem [{\citenamefont {Harrow}\ \emph {et~al.}(2020)\citenamefont {Harrow},
  \citenamefont {Mehraban},\ and\ \citenamefont {Soleimanifar}}]{Harrow2020b}%
  \BibitemOpen
  \bibfield  {author} {\bibinfo {author} {\bibfnamefont {A.~W.}\ \bibnamefont
  {Harrow}}, \bibinfo {author} {\bibfnamefont {S.}~\bibnamefont {Mehraban}},\
  and\ \bibinfo {author} {\bibfnamefont {M.}~\bibnamefont {Soleimanifar}},\
  }in\ \href {https://doi.org/10.1145/3357713.3384322} {\emph {\bibinfo
  {booktitle} {Proceedings of the 52nd Annual ACM SIGACT Symposium on Theory of
  Computing}}},\ \bibinfo {series and number} {STOC ’20}\ (\bibinfo
  {publisher} {ACM},\ \bibinfo {year} {2020})\BibitemShut {NoStop}%
\bibitem [{\citenamefont {Mann}\ and\ \citenamefont
  {Helmuth}(2021)}]{Mann2021}%
  \BibitemOpen
  \bibfield  {author} {\bibinfo {author} {\bibfnamefont {R.~L.}\ \bibnamefont
  {Mann}}\ and\ \bibinfo {author} {\bibfnamefont {T.}~\bibnamefont {Helmuth}},\
  }\bibfield  {journal} {\bibinfo  {journal} {Journal of Mathematical Physics}\
  }\textbf {\bibinfo {volume} {62}},\ \href {https://doi.org/10.1063/5.0013689}
  {10.1063/5.0013689} (\bibinfo {year} {2021})\BibitemShut {NoStop}%
\bibitem [{\citenamefont {Haah}\ \emph {et~al.}(2024)\citenamefont {Haah},
  \citenamefont {Kothari},\ and\ \citenamefont {Tang}}]{Haah_2024}%
  \BibitemOpen
  \bibfield  {author} {\bibinfo {author} {\bibfnamefont {J.}~\bibnamefont
  {Haah}}, \bibinfo {author} {\bibfnamefont {R.}~\bibnamefont {Kothari}},\ and\
  \bibinfo {author} {\bibfnamefont {E.}~\bibnamefont {Tang}},\ }\href
  {https://doi.org/10.1038/s41567-023-02376-x} {\bibfield  {journal} {\bibinfo
  {journal} {Nature Physics}\ }\textbf {\bibinfo {volume} {20}},\ \bibinfo
  {pages} {1027–1031} (\bibinfo {year} {2024})}\BibitemShut {NoStop}%
\bibitem [{sup()}]{supplemental}%
  \BibitemOpen
  \href@noop {} {}\bibinfo {note} {See the Supplemental Material for the
  detailed proofs of the main results.}\BibitemShut {Stop}%
\bibitem [{\citenamefont {Aizenman}\ and\ \citenamefont
  {Holley}(1987)}]{Aizenman1987}%
  \BibitemOpen
  \bibfield  {author} {\bibinfo {author} {\bibfnamefont {M.}~\bibnamefont
  {Aizenman}}\ and\ \bibinfo {author} {\bibfnamefont {R.}~\bibnamefont
  {Holley}},\ }\bibinfo {title} {Rapid convergence to equilibrium of stochastic
  ising models in the dobrushin shlosman regime},\ in\ \href
  {https://doi.org/10.1007/978-1-4613-8734-3_1} {\emph {\bibinfo {booktitle}
  {Percolation Theory and Ergodic Theory of Infinite Particle Systems}}}\
  (\bibinfo  {publisher} {Springer New York},\ \bibinfo {year} {1987})\ p.\
  \bibinfo {pages} {1–11}\BibitemShut {NoStop}%
\bibitem [{\citenamefont {Stroock}\ and\ \citenamefont
  {Zegarlinski}(1992)}]{stroock1992equivalence}%
  \BibitemOpen
  \bibfield  {author} {\bibinfo {author} {\bibfnamefont {D.~W.}\ \bibnamefont
  {Stroock}}\ and\ \bibinfo {author} {\bibfnamefont {B.}~\bibnamefont
  {Zegarlinski}},\ }\href@noop {} {\bibfield  {journal} {\bibinfo  {journal}
  {Communications in mathematical physics}\ }\textbf {\bibinfo {volume}
  {144}},\ \bibinfo {pages} {303} (\bibinfo {year} {1992})}\BibitemShut
  {NoStop}%
\bibitem [{\citenamefont {Dobrushin}(1970)}]{dobrushin1970prescribing}%
  \BibitemOpen
  \bibfield  {author} {\bibinfo {author} {\bibfnamefont {R.~L.}\ \bibnamefont
  {Dobrushin}},\ }\href {https://doi.org/10.1137/1115049} {\bibfield  {journal}
  {\bibinfo  {journal} {Theory of Probability \& Its Applications}\ }\textbf
  {\bibinfo {volume} {15}},\ \bibinfo {pages} {458} (\bibinfo {year} {1970})},\
  \Eprint {https://arxiv.org/abs/https://doi.org/10.1137/1115049}
  {https://doi.org/10.1137/1115049} \BibitemShut {NoStop}%
\bibitem [{\citenamefont {Bakshi}\ \emph {et~al.}(2024)\citenamefont {Bakshi},
  \citenamefont {Liu}, \citenamefont {Moitra},\ and\ \citenamefont
  {Tang}}]{tang_efficient}%
  \BibitemOpen
  \bibfield  {author} {\bibinfo {author} {\bibfnamefont {A.}~\bibnamefont
  {Bakshi}}, \bibinfo {author} {\bibfnamefont {A.}~\bibnamefont {Liu}},
  \bibinfo {author} {\bibfnamefont {A.}~\bibnamefont {Moitra}},\ and\ \bibinfo
  {author} {\bibfnamefont {E.}~\bibnamefont {Tang}},\ }in\ \href
  {https://doi.org/10.1109/FOCS61266.2024.00068} {\emph {\bibinfo {booktitle}
  {2024 IEEE 65th Annual Symposium on Foundations of Computer Science
  (FOCS)}}}\ (\bibinfo {year} {2024})\ pp.\ \bibinfo {pages}
  {1027--1036}\BibitemShut {NoStop}%
\bibitem [{\citenamefont {Kuwahara}\ and\ \citenamefont
  {Saito}(2020)}]{Kuwahara2019LongRange}%
  \BibitemOpen
  \bibfield  {author} {\bibinfo {author} {\bibfnamefont {T.}~\bibnamefont
  {Kuwahara}}\ and\ \bibinfo {author} {\bibfnamefont {K.}~\bibnamefont
  {Saito}},\ }\href {https://doi.org/10.1103/PhysRevX.10.031010} {\bibfield
  {journal} {\bibinfo  {journal} {Phys. Rev. X}\ }\textbf {\bibinfo {volume}
  {10}},\ \bibinfo {pages} {031010} (\bibinfo {year} {2020})}\BibitemShut
  {NoStop}%
\bibitem [{\citenamefont {Tran}\ \emph {et~al.}(2021)\citenamefont {Tran},
  \citenamefont {Guo}, \citenamefont {Baldwin}, \citenamefont {Ehrenberg},
  \citenamefont {Gorshkov},\ and\ \citenamefont {Lucas}}]{TranLR2021}%
  \BibitemOpen
  \bibfield  {author} {\bibinfo {author} {\bibfnamefont {M.~C.}\ \bibnamefont
  {Tran}}, \bibinfo {author} {\bibfnamefont {A.~Y.}\ \bibnamefont {Guo}},
  \bibinfo {author} {\bibfnamefont {C.~L.}\ \bibnamefont {Baldwin}}, \bibinfo
  {author} {\bibfnamefont {A.}~\bibnamefont {Ehrenberg}}, \bibinfo {author}
  {\bibfnamefont {A.~V.}\ \bibnamefont {Gorshkov}},\ and\ \bibinfo {author}
  {\bibfnamefont {A.}~\bibnamefont {Lucas}},\ }\href
  {https://doi.org/10.1103/PhysRevLett.127.160401} {\bibfield  {journal}
  {\bibinfo  {journal} {Phys. Rev. Lett.}\ }\textbf {\bibinfo {volume} {127}},\
  \bibinfo {pages} {160401} (\bibinfo {year} {2021})}\BibitemShut {NoStop}%
\bibitem [{\citenamefont {(Anthony)~Chen}\ \emph {et~al.}(2023)\citenamefont
  {(Anthony)~Chen}, \citenamefont {Lucas},\ and\ \citenamefont
  {Yin}}]{LR_Anthony_Chen_2023}%
  \BibitemOpen
  \bibfield  {author} {\bibinfo {author} {\bibfnamefont {C.-F.}\ \bibnamefont
  {(Anthony)~Chen}}, \bibinfo {author} {\bibfnamefont {A.}~\bibnamefont
  {Lucas}},\ and\ \bibinfo {author} {\bibfnamefont {C.}~\bibnamefont {Yin}},\
  }\href {https://doi.org/10.1088/1361-6633/acfaae} {\bibfield  {journal}
  {\bibinfo  {journal} {Reports on Progress in Physics}\ }\textbf {\bibinfo
  {volume} {86}},\ \bibinfo {pages} {116001} (\bibinfo {year}
  {2023})}\BibitemShut {NoStop}%
\bibitem [{\citenamefont {Poulin}\ and\ \citenamefont
  {Wocjan}(2009)}]{poulin_sampling_2009}%
  \BibitemOpen
  \bibfield  {author} {\bibinfo {author} {\bibfnamefont {D.}~\bibnamefont
  {Poulin}}\ and\ \bibinfo {author} {\bibfnamefont {P.}~\bibnamefont
  {Wocjan}},\ }\href {https://doi.org/10.1103/PhysRevLett.103.220502}
  {\bibfield  {journal} {\bibinfo  {journal} {Physical Review Letters}\
  }\textbf {\bibinfo {volume} {103}},\ \bibinfo {pages} {220502} (\bibinfo
  {year} {2009})},\ \bibinfo {note} {arXiv:0905.2199 [quant-ph]}\BibitemShut
  {NoStop}%
\bibitem [{\citenamefont {Temme}\ \emph {et~al.}(2011)\citenamefont {Temme},
  \citenamefont {Osborne}, \citenamefont {Vollbrecht}, \citenamefont {Poulin},\
  and\ \citenamefont {Verstraete}}]{Temme_2011}%
  \BibitemOpen
  \bibfield  {author} {\bibinfo {author} {\bibfnamefont {K.}~\bibnamefont
  {Temme}}, \bibinfo {author} {\bibfnamefont {T.~J.}\ \bibnamefont {Osborne}},
  \bibinfo {author} {\bibfnamefont {K.~G.}\ \bibnamefont {Vollbrecht}},
  \bibinfo {author} {\bibfnamefont {D.}~\bibnamefont {Poulin}},\ and\ \bibinfo
  {author} {\bibfnamefont {F.}~\bibnamefont {Verstraete}},\ }\href
  {https://doi.org/10.1038/nature09770} {\bibfield  {journal} {\bibinfo
  {journal} {Nature}\ }\textbf {\bibinfo {volume} {471}},\ \bibinfo {pages}
  {87–90} (\bibinfo {year} {2011})}\BibitemShut {NoStop}%
\bibitem [{\citenamefont {Chowdhury}\ and\ \citenamefont
  {Somma}(2017)}]{chowdhury2016quantum}%
  \BibitemOpen
  \bibfield  {author} {\bibinfo {author} {\bibfnamefont {A.~N.}\ \bibnamefont
  {Chowdhury}}\ and\ \bibinfo {author} {\bibfnamefont {R.~D.}\ \bibnamefont
  {Somma}},\ }\bibfield  {journal} {\bibinfo  {journal} {Quantum Information
  and Computation}\ }\textbf {\bibinfo {volume} {17}},\ \href
  {https://doi.org/10.26421/QIC17.1-2} {10.26421/QIC17.1-2} (\bibinfo {year}
  {2017})\BibitemShut {NoStop}%
\bibitem [{\citenamefont {Shtanko}\ and\ \citenamefont
  {Movassagh}(2023)}]{shtanko2021algorithms}%
  \BibitemOpen
  \bibfield  {author} {\bibinfo {author} {\bibfnamefont {O.}~\bibnamefont
  {Shtanko}}\ and\ \bibinfo {author} {\bibfnamefont {R.}~\bibnamefont
  {Movassagh}},\ }\href@noop {} {\bibinfo {title} {Preparing thermal states on
  noiseless and noisy programmable quantum processors}} (\bibinfo {year}
  {2023}),\ \Eprint {https://arxiv.org/abs/2112.14688} {arXiv:2112.14688
  [quant-ph]} \BibitemShut {NoStop}%
\bibitem [{\citenamefont {Holmes}\ \emph {et~al.}(2022)\citenamefont {Holmes},
  \citenamefont {Muraleedharan}, \citenamefont {Somma}, \citenamefont
  {Subasi},\ and\ \citenamefont {Şahinoğlu}}]{Holmes_2022}%
  \BibitemOpen
  \bibfield  {author} {\bibinfo {author} {\bibfnamefont {Z.}~\bibnamefont
  {Holmes}}, \bibinfo {author} {\bibfnamefont {G.}~\bibnamefont
  {Muraleedharan}}, \bibinfo {author} {\bibfnamefont {R.~D.}\ \bibnamefont
  {Somma}}, \bibinfo {author} {\bibfnamefont {Y.}~\bibnamefont {Subasi}},\ and\
  \bibinfo {author} {\bibfnamefont {B.}~\bibnamefont {Şahinoğlu}},\ }\href
  {https://doi.org/10.22331/q-2022-10-06-825} {\bibfield  {journal} {\bibinfo
  {journal} {Quantum}\ }\textbf {\bibinfo {volume} {6}},\ \bibinfo {pages}
  {825} (\bibinfo {year} {2022})}\BibitemShut {NoStop}%
\bibitem [{\citenamefont {Zhang}\ \emph {et~al.}(2023)\citenamefont {Zhang},
  \citenamefont {Bosse},\ and\ \citenamefont {Cubitt}}]{zhang2023dissipative}%
  \BibitemOpen
  \bibfield  {author} {\bibinfo {author} {\bibfnamefont {D.}~\bibnamefont
  {Zhang}}, \bibinfo {author} {\bibfnamefont {J.~L.}\ \bibnamefont {Bosse}},\
  and\ \bibinfo {author} {\bibfnamefont {T.}~\bibnamefont {Cubitt}},\
  }\href@noop {} {\bibinfo {title} {Dissipative quantum gibbs sampling}}
  (\bibinfo {year} {2023}),\ \Eprint {https://arxiv.org/abs/2304.04526}
  {arXiv:2304.04526 [quant-ph]} \BibitemShut {NoStop}%
\bibitem [{\citenamefont {Chen}\ and\ \citenamefont
  {Brandão}(2023)}]{chen2023fast}%
  \BibitemOpen
  \bibfield  {author} {\bibinfo {author} {\bibfnamefont {C.-F.}\ \bibnamefont
  {Chen}}\ and\ \bibinfo {author} {\bibfnamefont {F.~G. S.~L.}\ \bibnamefont
  {Brandão}},\ }\href@noop {} {\bibinfo {title} {Fast thermalization from the
  eigenstate thermalization hypothesis}} (\bibinfo {year} {2023}),\ \Eprint
  {https://arxiv.org/abs/2112.07646} {arXiv:2112.07646 [quant-ph]} \BibitemShut
  {NoStop}%
\bibitem [{\citenamefont {Li}\ and\ \citenamefont {Wang}(2023)}]{LiWang2023}%
  \BibitemOpen
  \bibfield  {author} {\bibinfo {author} {\bibfnamefont {X.}~\bibnamefont
  {Li}}\ and\ \bibinfo {author} {\bibfnamefont {C.}~\bibnamefont {Wang}},\ }in\
  \href {https://doi.org/10.4230/LIPIcs.ICALP.2023.87} {\emph {\bibinfo
  {booktitle} {50th International Colloquium on Automata, Languages, and
  Programming (ICALP 2023)}}},\ \bibinfo {series} {Leibniz International
  Proceedings in Informatics (LIPIcs)}, Vol.\ \bibinfo {volume} {261},\
  \bibinfo {editor} {edited by\ \bibinfo {editor} {\bibfnamefont
  {K.}~\bibnamefont {Etessami}}, \bibinfo {editor} {\bibfnamefont
  {U.}~\bibnamefont {Feige}},\ and\ \bibinfo {editor} {\bibfnamefont
  {G.}~\bibnamefont {Puppis}}}\ (\bibinfo  {publisher} {Schloss Dagstuhl --
  Leibniz-Zentrum f{\"u}r Informatik},\ \bibinfo {address} {Dagstuhl,
  Germany},\ \bibinfo {year} {2023})\ pp.\ \bibinfo {pages}
  {87:1--87:20}\BibitemShut {NoStop}%
\bibitem [{\citenamefont {Rall}\ \emph {et~al.}(2023)\citenamefont {Rall},
  \citenamefont {Wang},\ and\ \citenamefont {Wocjan}}]{Rall_2023}%
  \BibitemOpen
  \bibfield  {author} {\bibinfo {author} {\bibfnamefont {P.}~\bibnamefont
  {Rall}}, \bibinfo {author} {\bibfnamefont {C.}~\bibnamefont {Wang}},\ and\
  \bibinfo {author} {\bibfnamefont {P.}~\bibnamefont {Wocjan}},\ }\href
  {https://doi.org/10.22331/q-2023-10-10-1132} {\bibfield  {journal} {\bibinfo
  {journal} {Quantum}\ }\textbf {\bibinfo {volume} {7}},\ \bibinfo {pages}
  {1132} (\bibinfo {year} {2023})}\BibitemShut {NoStop}%
\bibitem [{\citenamefont {Chowdhury}\ \emph {et~al.}(2020)\citenamefont
  {Chowdhury}, \citenamefont {Low},\ and\ \citenamefont
  {Wiebe}}]{chowdhury_variational_2020}%
  \BibitemOpen
  \bibfield  {author} {\bibinfo {author} {\bibfnamefont {A.~N.}\ \bibnamefont
  {Chowdhury}}, \bibinfo {author} {\bibfnamefont {G.~H.}\ \bibnamefont {Low}},\
  and\ \bibinfo {author} {\bibfnamefont {N.}~\bibnamefont {Wiebe}},\ }\href
  {https://doi.org/10.48550/arXiv.2002.00055} {\bibinfo {title} {A
  {Variational} {Quantum} {Algorithm} for {Preparing} {Quantum} {Gibbs}
  {States}}} (\bibinfo {year} {2020}),\ \bibinfo {note} {arXiv:2002.00055
  [quant-ph]}\BibitemShut {NoStop}%
\bibitem [{\citenamefont {Wang}\ \emph {et~al.}(2021)\citenamefont {Wang},
  \citenamefont {Li},\ and\ \citenamefont {Wang}}]{wang_variational_2021}%
  \BibitemOpen
  \bibfield  {author} {\bibinfo {author} {\bibfnamefont {Y.}~\bibnamefont
  {Wang}}, \bibinfo {author} {\bibfnamefont {G.}~\bibnamefont {Li}},\ and\
  \bibinfo {author} {\bibfnamefont {X.}~\bibnamefont {Wang}},\ }\href
  {https://doi.org/10.1103/PhysRevApplied.16.054035} {\bibfield  {journal}
  {\bibinfo  {journal} {Physical Review Applied}\ }\textbf {\bibinfo {volume}
  {16}},\ \bibinfo {pages} {054035} (\bibinfo {year} {2021})},\ \bibinfo {note}
  {publisher: American Physical Society}\BibitemShut {NoStop}%
\bibitem [{\citenamefont {Brand{\~a}o}\ and\ \citenamefont
  {Kastoryano}(2019)}]{brandao2019finite}%
  \BibitemOpen
  \bibfield  {author} {\bibinfo {author} {\bibfnamefont {F.~G. S.~L.}\
  \bibnamefont {Brand{\~a}o}}\ and\ \bibinfo {author} {\bibfnamefont {M.~J.}\
  \bibnamefont {Kastoryano}},\ }\href
  {https://doi.org/10.1007/s00220-018-3150-8} {\bibfield  {journal} {\bibinfo
  {journal} {Communications in Mathematical Physics}\ }\textbf {\bibinfo
  {volume} {365}},\ \bibinfo {pages} {1} (\bibinfo {year} {2019})}\BibitemShut
  {NoStop}%
\bibitem [{\citenamefont {Bilgin}\ and\ \citenamefont
  {Boixo}(2010)}]{bilgin_preparing_2010}%
  \BibitemOpen
  \bibfield  {author} {\bibinfo {author} {\bibfnamefont {E.}~\bibnamefont
  {Bilgin}}\ and\ \bibinfo {author} {\bibfnamefont {S.}~\bibnamefont {Boixo}},\
  }\href {https://doi.org/10.1103/PhysRevLett.105.170405} {\bibfield  {journal}
  {\bibinfo  {journal} {Physical Review Letters}\ }\textbf {\bibinfo {volume}
  {105}},\ \bibinfo {pages} {170405} (\bibinfo {year} {2010})}\BibitemShut
  {NoStop}%
\bibitem [{\citenamefont {Ge}\ \emph {et~al.}(2016)\citenamefont {Ge},
  \citenamefont {Moln\'ar},\ and\ \citenamefont {Cirac}}]{GeMolnar2016}%
  \BibitemOpen
  \bibfield  {author} {\bibinfo {author} {\bibfnamefont {Y.}~\bibnamefont
  {Ge}}, \bibinfo {author} {\bibfnamefont {A.}~\bibnamefont {Moln\'ar}},\ and\
  \bibinfo {author} {\bibfnamefont {J.~I.}\ \bibnamefont {Cirac}},\ }\href
  {https://doi.org/10.1103/PhysRevLett.116.080503} {\bibfield  {journal}
  {\bibinfo  {journal} {Phys. Rev. Lett.}\ }\textbf {\bibinfo {volume} {116}},\
  \bibinfo {pages} {080503} (\bibinfo {year} {2016})}\BibitemShut {NoStop}%
\bibitem [{\citenamefont {Ding}\ \emph {et~al.}(2024)\citenamefont {Ding},
  \citenamefont {Li}, \citenamefont {Lin},\ and\ \citenamefont
  {Zhang}}]{ding2024polynomial}%
  \BibitemOpen
  \bibfield  {author} {\bibinfo {author} {\bibfnamefont {Z.}~\bibnamefont
  {Ding}}, \bibinfo {author} {\bibfnamefont {B.}~\bibnamefont {Li}}, \bibinfo
  {author} {\bibfnamefont {L.}~\bibnamefont {Lin}},\ and\ \bibinfo {author}
  {\bibfnamefont {R.}~\bibnamefont {Zhang}},\ }\href
  {https://arxiv.org/abs/2410.01206} {\bibinfo {title} {Polynomial-time
  preparation of low-temperature gibbs states for 2d toric code}} (\bibinfo
  {year} {2024}),\ \Eprint {https://arxiv.org/abs/2410.01206} {arXiv:2410.01206
  [quant-ph]} \BibitemShut {NoStop}%
\bibitem [{\citenamefont {Hwang}\ and\ \citenamefont
  {Jiang}(2024)}]{hwang2024gibbscommuting}%
  \BibitemOpen
  \bibfield  {author} {\bibinfo {author} {\bibfnamefont {Y.}~\bibnamefont
  {Hwang}}\ and\ \bibinfo {author} {\bibfnamefont {J.}~\bibnamefont {Jiang}},\
  }\href {https://arxiv.org/abs/2410.04909} {\bibinfo {title} {Gibbs state
  preparation for commuting hamiltonian: Mapping to classical gibbs sampling}}
  (\bibinfo {year} {2024}),\ \Eprint {https://arxiv.org/abs/2410.04909}
  {arXiv:2410.04909 [quant-ph]} \BibitemShut {NoStop}%
\bibitem [{\citenamefont {Kuwahara}(2025)}]{kuwahara2024clustering}%
  \BibitemOpen
  \bibfield  {author} {\bibinfo {author} {\bibfnamefont {T.}~\bibnamefont
  {Kuwahara}},\ }\bibfield  {journal} {\bibinfo  {journal} {Physical Review X}\
  }\textbf {\bibinfo {volume} {15}},\ \href {https://doi.org/10.1103/9hx7-pzxw}
  {10.1103/9hx7-pzxw} (\bibinfo {year} {2025})\BibitemShut {NoStop}%
\bibitem [{\citenamefont {Sherman}\ \emph {et~al.}(2016)\citenamefont
  {Sherman}, \citenamefont {Devakul}, \citenamefont {Hastings},\ and\
  \citenamefont {Singh}}]{Sherman_2016}%
  \BibitemOpen
  \bibfield  {author} {\bibinfo {author} {\bibfnamefont {N.~E.}\ \bibnamefont
  {Sherman}}, \bibinfo {author} {\bibfnamefont {T.}~\bibnamefont {Devakul}},
  \bibinfo {author} {\bibfnamefont {M.~B.}\ \bibnamefont {Hastings}},\ and\
  \bibinfo {author} {\bibfnamefont {R.~R.~P.}\ \bibnamefont {Singh}},\
  }\bibfield  {journal} {\bibinfo  {journal} {Physical Review E}\ }\textbf
  {\bibinfo {volume} {93}},\ \href {https://doi.org/10.1103/physreve.93.022128}
  {10.1103/physreve.93.022128} (\bibinfo {year} {2016})\BibitemShut {NoStop}%
\bibitem [{\citenamefont {Stefankovic}\ \emph {et~al.}(2007)\citenamefont
  {Stefankovic}, \citenamefont {Vempala},\ and\ \citenamefont
  {Vigoda}}]{Stefankovic2007}%
  \BibitemOpen
  \bibfield  {author} {\bibinfo {author} {\bibfnamefont {D.}~\bibnamefont
  {Stefankovic}}, \bibinfo {author} {\bibfnamefont {S.}~\bibnamefont
  {Vempala}},\ and\ \bibinfo {author} {\bibfnamefont {E.}~\bibnamefont
  {Vigoda}},\ }in\ \href {https://doi.org/10.1109/focs.2007.67} {\emph
  {\bibinfo {booktitle} {48th Annual IEEE Symposium on Foundations of Computer
  Science (FOCS’07)}}}\ (\bibinfo  {publisher} {IEEE},\ \bibinfo {year}
  {2007})\BibitemShut {NoStop}%
\bibitem [{\citenamefont {Kolmogorov}(2018)}]{pmlr-v75-kolmogorov18a}%
  \BibitemOpen
  \bibfield  {author} {\bibinfo {author} {\bibfnamefont {V.}~\bibnamefont
  {Kolmogorov}},\ }in\ \href
  {https://proceedings.mlr.press/v75/kolmogorov18a.html} {\emph {\bibinfo
  {booktitle} {Proceedings of the 31st Conference On Learning Theory}}},\
  \bibinfo {series} {Proceedings of Machine Learning Research}, Vol.~\bibinfo
  {volume} {75},\ \bibinfo {editor} {edited by\ \bibinfo {editor}
  {\bibfnamefont {S.}~\bibnamefont {Bubeck}}, \bibinfo {editor} {\bibfnamefont
  {V.}~\bibnamefont {Perchet}},\ and\ \bibinfo {editor} {\bibfnamefont
  {P.}~\bibnamefont {Rigollet}}}\ (\bibinfo  {publisher} {PMLR},\ \bibinfo
  {year} {2018})\ pp.\ \bibinfo {pages} {228--249}\BibitemShut {NoStop}%
\bibitem [{\citenamefont {Huber}(2015)}]{Huber2015}%
  \BibitemOpen
  \bibfield  {author} {\bibinfo {author} {\bibfnamefont {M.}~\bibnamefont
  {Huber}},\ }\bibfield  {journal} {\bibinfo  {journal} {The Annals of Applied
  Probability}\ }\textbf {\bibinfo {volume} {25}},\ \href
  {https://doi.org/10.1214/14-aap1015} {10.1214/14-aap1015} (\bibinfo {year}
  {2015})\BibitemShut {NoStop}%
\bibitem [{\citenamefont {Bezáková}\ \emph {et~al.}(2008)\citenamefont
  {Bezáková}, \citenamefont {Štefankovič}, \citenamefont {Vazirani},\ and\
  \citenamefont {Vigoda}}]{Bezkov2008}%
  \BibitemOpen
  \bibfield  {author} {\bibinfo {author} {\bibfnamefont {I.}~\bibnamefont
  {Bezáková}}, \bibinfo {author} {\bibfnamefont {D.}~\bibnamefont
  {Štefankovič}}, \bibinfo {author} {\bibfnamefont {V.~V.}\ \bibnamefont
  {Vazirani}},\ and\ \bibinfo {author} {\bibfnamefont {E.}~\bibnamefont
  {Vigoda}},\ }\href {https://doi.org/10.1137/050644033} {\bibfield  {journal}
  {\bibinfo  {journal} {SIAM Journal on Computing}\ }\textbf {\bibinfo {volume}
  {37}},\ \bibinfo {pages} {1429–1454} (\bibinfo {year} {2008})}\BibitemShut
  {NoStop}%
\bibitem [{\citenamefont {Wocjan}\ and\ \citenamefont
  {Abeyesinghe}(2008)}]{Wocjan2008}%
  \BibitemOpen
  \bibfield  {author} {\bibinfo {author} {\bibfnamefont {P.}~\bibnamefont
  {Wocjan}}\ and\ \bibinfo {author} {\bibfnamefont {A.}~\bibnamefont
  {Abeyesinghe}},\ }\bibfield  {journal} {\bibinfo  {journal} {Physical Review
  A}\ }\textbf {\bibinfo {volume} {78}},\ \href
  {https://doi.org/10.1103/physreva.78.042336} {10.1103/physreva.78.042336}
  (\bibinfo {year} {2008})\BibitemShut {NoStop}%
\bibitem [{\citenamefont {Montanaro}(2015)}]{Montanaro2015}%
  \BibitemOpen
  \bibfield  {author} {\bibinfo {author} {\bibfnamefont {A.}~\bibnamefont
  {Montanaro}},\ }\href {https://doi.org/10.1098/rspa.2015.0301} {\bibfield
  {journal} {\bibinfo  {journal} {Proceedings of the Royal Society A:
  Mathematical, Physical and Engineering Sciences}\ }\textbf {\bibinfo {volume}
  {471}},\ \bibinfo {pages} {20150301} (\bibinfo {year} {2015})}\BibitemShut
  {NoStop}%
\bibitem [{\citenamefont {Hamoudi}\ and\ \citenamefont
  {Magniez}(2019)}]{hamoudi_magniez}%
  \BibitemOpen
  \bibfield  {author} {\bibinfo {author} {\bibfnamefont {Y.}~\bibnamefont
  {Hamoudi}}\ and\ \bibinfo {author} {\bibfnamefont {F.}~\bibnamefont
  {Magniez}}\ }(\bibinfo  {publisher} {Schloss Dagstuhl – Leibniz-Zentrum
  f\"{u}r Informatik},\ \bibinfo {year} {2019})\BibitemShut {NoStop}%
\bibitem [{\citenamefont {Harrow}\ and\ \citenamefont
  {Wei}(2020)}]{Harrow2020}%
  \BibitemOpen
  \bibfield  {author} {\bibinfo {author} {\bibfnamefont {A.~W.}\ \bibnamefont
  {Harrow}}\ and\ \bibinfo {author} {\bibfnamefont {A.~Y.}\ \bibnamefont
  {Wei}},\ }\bibinfo {title} {Adaptive quantum simulated annealing for bayesian
  inference and estimating partition functions},\ in\ \href
  {https://doi.org/10.1137/1.9781611975994.12} {\emph {\bibinfo {booktitle}
  {Proceedings of the Fourteenth Annual ACM-SIAM Symposium on Discrete
  Algorithms}}}\ (\bibinfo  {publisher} {Society for Industrial and Applied
  Mathematics},\ \bibinfo {year} {2020})\ p.\ \bibinfo {pages}
  {193–212}\BibitemShut {NoStop}%
\bibitem [{\citenamefont {Arunachalam}\ \emph {et~al.}(2022)\citenamefont
  {Arunachalam}, \citenamefont {Havlicek}, \citenamefont {Nannicini},
  \citenamefont {Temme},\ and\ \citenamefont {Wocjan}}]{Arunachalam2022}%
  \BibitemOpen
  \bibfield  {author} {\bibinfo {author} {\bibfnamefont {S.}~\bibnamefont
  {Arunachalam}}, \bibinfo {author} {\bibfnamefont {V.}~\bibnamefont
  {Havlicek}}, \bibinfo {author} {\bibfnamefont {G.}~\bibnamefont {Nannicini}},
  \bibinfo {author} {\bibfnamefont {K.}~\bibnamefont {Temme}},\ and\ \bibinfo
  {author} {\bibfnamefont {P.}~\bibnamefont {Wocjan}},\ }\href
  {https://doi.org/10.22331/q-2022-09-01-789} {\bibfield  {journal} {\bibinfo
  {journal} {Quantum}\ }\textbf {\bibinfo {volume} {6}},\ \bibinfo {pages}
  {789} (\bibinfo {year} {2022})}\BibitemShut {NoStop}%
\bibitem [{\citenamefont {Cornelissen}\ and\ \citenamefont
  {Hamoudi}(2023)}]{Cornelissen2023}%
  \BibitemOpen
  \bibfield  {author} {\bibinfo {author} {\bibfnamefont {A.}~\bibnamefont
  {Cornelissen}}\ and\ \bibinfo {author} {\bibfnamefont {Y.}~\bibnamefont
  {Hamoudi}},\ }\bibinfo {title} {A sublinear-time quantum algorithm for
  approximating partition functions},\ in\ \href
  {https://doi.org/10.1137/1.9781611977554.ch46} {\emph {\bibinfo {booktitle}
  {Proceedings of the 2023 Annual ACM-SIAM Symposium on Discrete Algorithms
  (SODA)}}}\ (\bibinfo  {publisher} {Society for Industrial and Applied
  Mathematics},\ \bibinfo {year} {2023})\ p.\ \bibinfo {pages}
  {1245–1264}\BibitemShut {NoStop}%
\bibitem [{\citenamefont {Dyer}\ and\ \citenamefont
  {Frieze}(1991)}]{dyer1991computing}%
  \BibitemOpen
  \bibfield  {author} {\bibinfo {author} {\bibfnamefont {M.}~\bibnamefont
  {Dyer}}\ and\ \bibinfo {author} {\bibfnamefont {A.}~\bibnamefont {Frieze}},\
  }\href@noop {} {\bibfield  {journal} {\bibinfo  {journal} {Probabilistic
  combinatorics and its applications}\ }\textbf {\bibinfo {volume} {44}},\
  \bibinfo {pages} {0754} (\bibinfo {year} {1991})}\BibitemShut {NoStop}%
\bibitem [{\citenamefont {Haah}\ \emph {et~al.}(2021)\citenamefont {Haah},
  \citenamefont {Hastings}, \citenamefont {Kothari},\ and\ \citenamefont
  {Low}}]{Haah_2021}%
  \BibitemOpen
  \bibfield  {author} {\bibinfo {author} {\bibfnamefont {J.}~\bibnamefont
  {Haah}}, \bibinfo {author} {\bibfnamefont {M.~B.}\ \bibnamefont {Hastings}},
  \bibinfo {author} {\bibfnamefont {R.}~\bibnamefont {Kothari}},\ and\ \bibinfo
  {author} {\bibfnamefont {G.~H.}\ \bibnamefont {Low}},\ }\href
  {https://doi.org/10.1137/18m1231511} {\bibfield  {journal} {\bibinfo
  {journal} {SIAM Journal on Computing}\ }\textbf {\bibinfo {volume} {52}},\
  \bibinfo {pages} {FOCS18} (\bibinfo {year} {2021})}\BibitemShut {NoStop}%
\bibitem [{\citenamefont {Tran}\ \emph {et~al.}(2019)\citenamefont {Tran},
  \citenamefont {Guo}, \citenamefont {Su}, \citenamefont {Garrison},
  \citenamefont {Eldredge}, \citenamefont {Foss-Feig}, \citenamefont {Childs},\
  and\ \citenamefont {Gorshkov}}]{TranLong19}%
  \BibitemOpen
  \bibfield  {author} {\bibinfo {author} {\bibfnamefont {M.~C.}\ \bibnamefont
  {Tran}}, \bibinfo {author} {\bibfnamefont {A.~Y.}\ \bibnamefont {Guo}},
  \bibinfo {author} {\bibfnamefont {Y.}~\bibnamefont {Su}}, \bibinfo {author}
  {\bibfnamefont {J.~R.}\ \bibnamefont {Garrison}}, \bibinfo {author}
  {\bibfnamefont {Z.}~\bibnamefont {Eldredge}}, \bibinfo {author}
  {\bibfnamefont {M.}~\bibnamefont {Foss-Feig}}, \bibinfo {author}
  {\bibfnamefont {A.~M.}\ \bibnamefont {Childs}},\ and\ \bibinfo {author}
  {\bibfnamefont {A.~V.}\ \bibnamefont {Gorshkov}},\ }\href
  {https://doi.org/10.1103/PhysRevX.9.031006} {\bibfield  {journal} {\bibinfo
  {journal} {Phys. Rev. X}\ }\textbf {\bibinfo {volume} {9}},\ \bibinfo {pages}
  {031006} (\bibinfo {year} {2019})}\BibitemShut {NoStop}%
\bibitem [{\citenamefont {Sánchez-Segovia}\ \emph {et~al.}(2025)\citenamefont
  {Sánchez-Segovia}, \citenamefont {T.~Schneider},\ and\ \citenamefont
  {M.~Alhambra}}]{sanchezsegovia2025hightemperature}%
  \BibitemOpen
  \bibfield  {author} {\bibinfo {author} {\bibfnamefont {J.}~\bibnamefont
  {Sánchez-Segovia}}, \bibinfo {author} {\bibfnamefont {J.}~\bibnamefont
  {T.~Schneider}},\ and\ \bibinfo {author} {\bibfnamefont {A.}~\bibnamefont
  {M.~Alhambra}},\ }\bibfield  {journal} {\bibinfo  {journal} {PRX Quantum}\
  }\textbf {\bibinfo {volume} {6}},\ \href {https://doi.org/10.1103/cww3-j1vd}
  {10.1103/cww3-j1vd} (\bibinfo {year} {2025})\BibitemShut {NoStop}%
\bibitem [{\citenamefont {Scandi}\ and\ \citenamefont {Álvaro
  M.~Alhambra}(2025)}]{scandi2025thermalizationopenmanybodysystems}%
  \BibitemOpen
  \bibfield  {author} {\bibinfo {author} {\bibfnamefont {M.}~\bibnamefont
  {Scandi}}\ and\ \bibinfo {author} {\bibnamefont {Álvaro M.~Alhambra}},\
  }\href {https://arxiv.org/abs/2505.20064} {\bibinfo {title} {Thermalization
  in open many-body systems and kms detailed balance}} (\bibinfo {year}
  {2025}),\ \Eprint {https://arxiv.org/abs/2505.20064} {arXiv:2505.20064
  [quant-ph]} \BibitemShut {NoStop}%
\bibitem [{\citenamefont {Ding}\ \emph
  {et~al.}(2025{\natexlab{b}})\citenamefont {Ding}, \citenamefont {Zhan},
  \citenamefont {Preskill},\ and\ \citenamefont
  {Lin}}]{ding2025endtoendefficientquantumthermal}%
  \BibitemOpen
  \bibfield  {author} {\bibinfo {author} {\bibfnamefont {Z.}~\bibnamefont
  {Ding}}, \bibinfo {author} {\bibfnamefont {Y.}~\bibnamefont {Zhan}}, \bibinfo
  {author} {\bibfnamefont {J.}~\bibnamefont {Preskill}},\ and\ \bibinfo
  {author} {\bibfnamefont {L.}~\bibnamefont {Lin}},\ }\href
  {https://arxiv.org/abs/2508.05703} {\bibinfo {title} {End-to-end efficient
  quantum thermal and ground state preparation made simple}} (\bibinfo {year}
  {2025}{\natexlab{b}}),\ \Eprint {https://arxiv.org/abs/2508.05703}
  {arXiv:2508.05703 [quant-ph]} \BibitemShut {NoStop}%
\bibitem [{\citenamefont {Lieb}\ and\ \citenamefont
  {Robinson}(1972)}]{lieb1972finite}%
  \BibitemOpen
  \bibfield  {author} {\bibinfo {author} {\bibfnamefont {E.~H.}\ \bibnamefont
  {Lieb}}\ and\ \bibinfo {author} {\bibfnamefont {D.~W.}\ \bibnamefont
  {Robinson}},\ }\href {https://doi.org/10.1007/BF01645779} {\bibfield
  {journal} {\bibinfo  {journal} {Communications in Mathematical Physics}\
  }\textbf {\bibinfo {volume} {28}},\ \bibinfo {pages} {251} (\bibinfo {year}
  {1972})}\BibitemShut {NoStop}%
\bibitem [{\citenamefont {Hastings}(2004)}]{hastings2004lieb}%
  \BibitemOpen
  \bibfield  {author} {\bibinfo {author} {\bibfnamefont {M.~B.}\ \bibnamefont
  {Hastings}},\ }\href {https://doi.org/10.1103/PhysRevB.69.104431} {\bibfield
  {journal} {\bibinfo  {journal} {Phys. Rev. B}\ }\textbf {\bibinfo {volume}
  {69}},\ \bibinfo {pages} {104431} (\bibinfo {year} {2004})}\BibitemShut
  {NoStop}%
\bibitem [{\citenamefont {Nachtergaele}\ and\ \citenamefont
  {Sims}(2006)}]{nachtergaele2006lieb}%
  \BibitemOpen
  \bibfield  {author} {\bibinfo {author} {\bibfnamefont {B.}~\bibnamefont
  {Nachtergaele}}\ and\ \bibinfo {author} {\bibfnamefont {R.}~\bibnamefont
  {Sims}},\ }\href {https://doi.org/10.1007/s00220-006-1556-1} {\bibfield
  {journal} {\bibinfo  {journal} {Communications in Mathematical Physics}\
  }\textbf {\bibinfo {volume} {265}},\ \bibinfo {pages} {119–130} (\bibinfo
  {year} {2006})}\BibitemShut {NoStop}%
\bibitem [{\citenamefont {Hastings}\ and\ \citenamefont
  {Koma}(2006)}]{HastingsKoma2006}%
  \BibitemOpen
  \bibfield  {author} {\bibinfo {author} {\bibfnamefont {M.~B.}\ \bibnamefont
  {Hastings}}\ and\ \bibinfo {author} {\bibfnamefont {T.}~\bibnamefont
  {Koma}},\ }\href {https://doi.org/10.1007/s00220-006-0030-4} {\bibfield
  {journal} {\bibinfo  {journal} {Communications in Mathematical Physics}\
  }\textbf {\bibinfo {volume} {265}},\ \bibinfo {pages} {781} (\bibinfo {year}
  {2006})}\BibitemShut {NoStop}%
\bibitem [{\citenamefont {Gilyén}\ \emph {et~al.}(2019)\citenamefont
  {Gilyén}, \citenamefont {Su}, \citenamefont {Low},\ and\ \citenamefont
  {Wiebe}}]{Gilyn2019}%
  \BibitemOpen
  \bibfield  {author} {\bibinfo {author} {\bibfnamefont {A.}~\bibnamefont
  {Gilyén}}, \bibinfo {author} {\bibfnamefont {Y.}~\bibnamefont {Su}},
  \bibinfo {author} {\bibfnamefont {G.~H.}\ \bibnamefont {Low}},\ and\ \bibinfo
  {author} {\bibfnamefont {N.}~\bibnamefont {Wiebe}},\ }in\ \href
  {https://doi.org/10.1145/3313276.3316366} {\emph {\bibinfo {booktitle}
  {Proceedings of the 51st Annual ACM SIGACT Symposium on Theory of
  Computing}}},\ \bibinfo {series and number} {STOC ’19}\ (\bibinfo
  {publisher} {ACM},\ \bibinfo {year} {2019})\BibitemShut {NoStop}%
\bibitem [{\citenamefont {Tosta}\ \emph {et~al.}(2023)\citenamefont {Tosta},
  \citenamefont {Silva}, \citenamefont {Camilo},\ and\ \citenamefont
  {Aolita}}]{randomized_matrix_signal}%
  \BibitemOpen
  \bibfield  {author} {\bibinfo {author} {\bibfnamefont {A.}~\bibnamefont
  {Tosta}}, \bibinfo {author} {\bibfnamefont {T.~d.~L.}\ \bibnamefont {Silva}},
  \bibinfo {author} {\bibfnamefont {G.}~\bibnamefont {Camilo}},\ and\ \bibinfo
  {author} {\bibfnamefont {L.}~\bibnamefont {Aolita}},\ }\href
  {https://doi.org/10.48550/ARXIV.2307.11824} {\bibinfo {title} {Randomized
  semi-quantum matrix processing}} (\bibinfo {year} {2023})\BibitemShut
  {NoStop}%
\bibitem [{\citenamefont {Jackson}\ \emph {et~al.}(2023)\citenamefont
  {Jackson}, \citenamefont {Kapourniotis},\ and\ \citenamefont
  {Datta}}]{Jackson2023}%
  \BibitemOpen
  \bibfield  {author} {\bibinfo {author} {\bibfnamefont {A.}~\bibnamefont
  {Jackson}}, \bibinfo {author} {\bibfnamefont {T.}~\bibnamefont
  {Kapourniotis}},\ and\ \bibinfo {author} {\bibfnamefont {A.}~\bibnamefont
  {Datta}},\ }\bibfield  {journal} {\bibinfo  {journal} {Physical Review A}\
  }\textbf {\bibinfo {volume} {107}},\ \href
  {https://doi.org/10.1103/physreva.107.012421} {10.1103/physreva.107.012421}
  (\bibinfo {year} {2023})\BibitemShut {NoStop}%
\bibitem [{\citenamefont {Chowdhury}\ \emph {et~al.}(2021)\citenamefont
  {Chowdhury}, \citenamefont {Somma},\ and\ \citenamefont
  {Subaşı}}]{Chowdhury2021}%
  \BibitemOpen
  \bibfield  {author} {\bibinfo {author} {\bibfnamefont {A.~N.}\ \bibnamefont
  {Chowdhury}}, \bibinfo {author} {\bibfnamefont {R.~D.}\ \bibnamefont
  {Somma}},\ and\ \bibinfo {author} {\bibfnamefont {Y.}~\bibnamefont
  {Subaşı}},\ }\bibfield  {journal} {\bibinfo  {journal} {Physical Review A}\
  }\textbf {\bibinfo {volume} {103}},\ \href
  {https://doi.org/10.1103/physreva.103.032422} {10.1103/physreva.103.032422}
  (\bibinfo {year} {2021})\BibitemShut {NoStop}%
\bibitem [{\citenamefont {van Apeldoorn}\ \emph {et~al.}(2020)\citenamefont
  {van Apeldoorn}, \citenamefont {Gilyén}, \citenamefont {Gribling},\ and\
  \citenamefont {de~Wolf}}]{vanApeldoorn2020}%
  \BibitemOpen
  \bibfield  {author} {\bibinfo {author} {\bibfnamefont {J.}~\bibnamefont {van
  Apeldoorn}}, \bibinfo {author} {\bibfnamefont {A.}~\bibnamefont {Gilyén}},
  \bibinfo {author} {\bibfnamefont {S.}~\bibnamefont {Gribling}},\ and\
  \bibinfo {author} {\bibfnamefont {R.}~\bibnamefont {de~Wolf}},\ }\href
  {https://doi.org/10.22331/q-2020-02-14-230} {\bibfield  {journal} {\bibinfo
  {journal} {Quantum}\ }\textbf {\bibinfo {volume} {4}},\ \bibinfo {pages}
  {230} (\bibinfo {year} {2020})}\BibitemShut {NoStop}%
\end{thebibliography}%

\appendix

\pagestyle{fancy}
\fancyhf{} 
\fancyfoot[C]{\thepage} 
\renewcommand{\headrulewidth}{0pt} 
\renewcommand{\footrulewidth}{0pt}
\widetext

\section{Relation of oscillator norm bounds to Dobrushin's classical condition}

	\noindent The conditions we require for the Lindbladian to be rapidly mixing in the main text are as follows
	\begin{align}
		&\|[\delta_a, \mathcal{L}_b^{(\beta)\dagger}](X)\|_\infty\le \sum_c \kappa^c_{a,b} \|\delta_c(X)\|_\infty\label{eq:11}\\
		&\|\delta_a(\mathcal{L}_a^{(\beta)\dagger}-\mathcal{L}_a^{(0)\dagger})(X)\|_\infty\le \sum_c \gamma_a^c\|\delta_c(X)\|_\infty\,.\label{eq:22}
\end{align}

	These are reminiscent of Dobrushin's uniqueness condition in the classical setting, as we now explain. Let us describe the dynamics towards a classical Gibbs distribution $p_\beta$ with the \emph{heat-bath generators}. For any region $A\subset \Lambda$ and any boundary condition $\tau\in \{-1,1\}^{\Lambda\backslash A}$, consider the conditional expectations $\mathbb{E}_{A}^{\tau}:f\mapsto \sum_{\eta_A}f(\eta_A,\tau)p^{\tau}_\beta(\eta_A)$, where $p_\beta^{\tau}$ denotes the conditional distribution over spin configurations in $A$ conditioned on configuration $\tau$ outside. Then, for any $\eta=(\eta_1,\dots,\eta_n)\in\{-1,1\}^{\Lambda}$ with restrictions $\eta^j=(\eta_1,\dots,\eta_{j-1},\eta_{j+1}\dots ,\eta_n)$, the heat-bath generators are
	\begin{align*}
		L^{(\beta)}_A(f)(\eta)=\sum_{j\in A} \mathbb{E}_{\{j\}}^{\eta^j}(f)(\eta^j)-f(\eta)\,.
\end{align*}

	In this case, Eq. \eqref{eq:22} is trivially satisfied, since the difference $L_a^{(\beta)}-L_a^{(0)}$ maps to functions independent on the $\eta_a$. Moreover, the LHS of Eq. \eqref{eq:11} reduces to
	$\|[\mathbb{E}^{{u}}_{\{a\}},\mathbb{E}_{\{b\}}] \|$, where $\mathbb{E}^{{u}}_{\{a\}}$ denotes the local uniform average over spin configurations on site $a$. Further simplifying this commutator, we see that
	\begin{align*}
		&[\mathbb{E}^{{u}}_{\{a\}},\mathbb{E}_{\{b\}}](f)(\eta)\\
		&\quad= \frac{1}{2}\sum_{\sigma_a,\sigma_b}f(\sigma_a,\eta^{\{a,b\}},\sigma_b)\Big( p_\beta^{\eta^{\{a,b\}},\sigma_a}(\sigma_b)-p_\beta^{\eta^{b}}(\sigma_b)\Big)\\
		&\quad =\frac{1}{2}\sum_{\sigma_b}f(\sigma_a,\eta^{\{a,b\}},\sigma_b)\big(p_\beta^{\eta_-^{\{a,b\}}}(\sigma_b)-p_\beta^{\eta_+^{\{a,b\}}}(\sigma_b)\big)\\
		&\quad =\frac{1}{2}\sum_{\sigma_b}\big(f(\sigma_a,\eta^{\{a,b\}},\sigma_b)-\mathbb{E}_b[f(\sigma_a,\eta^{\{a,b\}},.)]\big)\\
		&\qquad\qquad\qquad\qquad\qquad\qquad \times\big(p_\beta^{\eta_-^{\{a,b\}}}(\sigma_b)-p_\beta^{\eta_+^{\{a,b\}}}(\sigma_b)\big)
	\end{align*}
	where $\eta^{\{a,b\}}$ denotes the vector of spin configurations outside $\{a,b\}$ and $\eta^{\{a,b\}}_{\pm}$ that for which configuration in $a$ is fixed to $\pm$. 
	The supremum over all bounded functions is attained at the total variation
	\begin{align*}
		c_{b,a}:=\!\!\max_{{\eta^{\{a,b\}}\in\{-1,1\}^{n-2}}}\Big\|p_\beta^{\eta_+^{\{a,b\}}}-p_\beta^{\eta_-^{\{a,b\}}}\Big\|_{\operatorname{TV}}
	\end{align*}
	which corresponds to Dobrushin's influence matrix \cite{dobrushin1970prescribing}. Then
	\begin{align*}
		\|[\mathbb{E}^{{u}}_{\{a\}},\mathbb{E}_{\{b\}}](f)\|\le  c_{b,a}\|f-\mathbb{E}_b(f)\|\,.
	\end{align*}
	Dobrushin's uniqueness criterion then refers to the condition that $\|c\|_1:=\sup_b\sum_{a\ne b}c_{b,a}<1$. This directly implies that the tensor $\kappa_{a,b}^c=\mathbf{1}_{b=c}c_{c,a}$ satisfies
	\begin{align*}
		\sup_c\sum_a\sum_{b\ne a}\kappa^c_{a,b}=\sup_c\sum_{a\ne c}c_{c,a} =\|c\|_1<1\,.
	\end{align*}
	This shows that our criteria for rapid mixing are weaker than Dobrushin's uniqueness condition when restricting to the classical setting.

\section{Rapid mixing at high temperature}\label{sec:highTgibbsrm}

\subsection{Local conditions for oscillator norm decay}

First, let us define each of the local terms of the generator
\begin{align}\label{eq:generator}
	\mathcal{L}^{(\beta)}(\rho)&
	\equiv \sum_{a\in\Lambda}\mathcal{L}^{(\beta)}_a(\rho)\,,
\end{align}
corresponding to the contribution of the jump operators at a fixed site $a$, $\{A^{a,\alpha}\}_{\alpha=0}^3$. We consider the non-commutative quasi-derivations $\delta_a(X):=X-\frac{1}{2}\,I_a\otimes \operatorname{tr}_a(X)$. We have that, denoting $X_s:=e^{s\mathcal{L}^{(\beta)\dagger}}(X)$
\begin{align}
	\frac{d}{ds}\delta_a (X_s)&=\delta_a\mathcal{L}^{(\beta)\dagger}(X_s)\\
	&=\delta_a \mathcal{L}^{(0)\dagger}_a(X_s)+\delta_a(\mathcal{L}^{(\beta)\dagger}_a-\mathcal{L}^{(0)\dagger}_a)(X_s)+\sum_{b\ne a}\delta_a\mathcal{L}_b^{(\beta)\dagger}(X_s)\,.
\end{align}

In Appendix B.2 of \cite{rouze2024efficient} it was shown that $\mathcal{L}_a^{(0)\dagger}(X):=\lambda\left( \frac{1}{2}\operatorname{tr}_a(X)-X\right)$ is the generator of the depolarizing semigroup on qubit $a$ with $\lambda=\frac{1}{\sqrt{2}e^{\frac{1}{4}}}$. We thus get that $\delta_a\mathcal{L}_a^{(0)}=-\lambda \delta_a$ and hence
\begin{align}
	\frac{d}{ds}\delta_a (X_s)=-\lambda\delta_a(X_s)+\delta_a(\mathcal{L}^{(\beta)\dagger}_a-\mathcal{L}^{(0)\dagger}_a)(X_s)+\sum_{b\ne a}\mathcal{L}_b^{(\beta)\dagger}\delta_a(X_s)+\sum_{b\ne a}[\delta_a, \mathcal{L}_b^{(\beta)\dagger}](X_s)\,.
\end{align}
Denoting by $\mathcal{P}_t$ the semigroup generated by $\mathcal{L}^{(\beta)\dagger}$, and by $\mathcal{P}_t^{(a)}$ the one generated by $\sum_{b\ne a}\mathcal{L}_b^{(\beta)\dagger}$, we have that for any $t\ge s$ 
\begin{align}
	\frac{d}{ds}\left(e^{\lambda s}\mathcal{P}_{t-s}^{(a)}\delta_a \mathcal{P}_s(X)\right)=e^{\lambda s}\left(\sum_{b\ne a}\mathcal{P}_{t-s}^{(a)}[\delta_a, \mathcal{L}_b^{(\beta)\dagger}](X_s)+\mathcal{P}_{t-s}^{(a)}\delta_a(\mathcal{L}^{(\beta)\dagger}_a-\mathcal{L}^{(0)\dagger}_a)(X_s)\right)\,.
\end{align}
Integrating the above from $0$ to $t$ and using the contractivity of the semigroups, we have that
\begin{align}
	\|\delta_a \mathcal{P}_t(X)\|_\infty&\le e^{-\lambda t}\|\mathcal{P}_{t}^{(a)}\delta_a (X)\|_\infty\\
	&~~~~~~~~ + \int_{0}^t e^{\lambda (s-t)}\left(\sum_{b\ne a}\|\mathcal{P}_{t-s}^{(k)}[\delta_a, \mathcal{L}_b^{(\beta)\dagger}](X_s)\|_\infty+\|\mathcal{P}_{t-s}^{(a)}\delta_a(\mathcal{L}^{(\beta)\dagger}_a-\mathcal{L}^{(0)\dagger}_a)(X_s)\|_\infty\right)\,ds\\
	&\le e^{-\lambda t} \|\delta_a (X)\|_\infty+ \int_{0}^t e^{\lambda (s-t)}\left(\sum_{b\ne a}\|[\delta_a, \mathcal{L}_b^{(\beta)\dagger}](X_s)\|_\infty+\|\delta_a(\mathcal{L}^{(\beta)\dagger}_a-\mathcal{L}^{(0)\dagger}_a)(X_s)\|_\infty\right)\,ds
\end{align}
If we prove the existence of positive constants $\kappa_{a,b}^c\ge 0$ and $\gamma^c_a\ge 0$ such that, for all $X$ and $b\ne a$,
\begin{align} \label{eq:condition}
	&\|[\delta_a, \mathcal{L}_b^{(\beta)\dagger}](X)\|_\infty\le \sum_c \kappa^c_{a,b} \|\delta_c(X)\|_\infty\\
	&\|\delta_a(\mathcal{L}_a^{(\beta)\dagger}-\mathcal{L}_a^{(0)\dagger})(X)\|_\infty\le \sum_c \gamma_a^c\|\delta_c(X)\|_\infty \nonumber
\end{align}
with $\sum_a\sum_{b\ne a}\kappa_{a,b}^c+\sum_a \gamma_a^c\le \kappa $, we would get that, denoting the oscillator norm as $\vertiii{.}:=\sum_a \|\delta_a(.)\|_\infty$,
\begin{align}
	\vertiii{X_t}\le e^{-\lambda t}\vertiii{X}+\kappa\int_0^t e^{\lambda(s-t)}\,  \vertiii{X_s}\,ds\,.
\end{align}
From here, the conclusions of \cite{Majewski1995,temme2015fast} follow. That is, as long as $\kappa<\lambda$, for all $X$,
\begin{align}
	\vertiii{X_t}\le e^{-(\lambda-\kappa)t}\vertiii{X}\,.
\end{align}

\subsection{Local conditions via Lindblad approximations}

We now show how the constants defined in Eq.~\eqref{eq:condition} above satisfy $\sum_a\sum_{b\ne a}\kappa_{a,b}^c+\sum_a \gamma_a^c\le \kappa $ for suitably well behaved Lindbladians. For this, we show in Appendix \ref{sec:local} below that the Lindbladians can be approximated in space and in $\beta$ as follows:
\begin{align}
	&\|\mathcal{L}_b^{\beta,r\dagger}-\mathcal{L}_{b}^{\beta,r-1\dagger }\|_{\infty\to\infty}\le \zeta(r),\\
	&\|\mathcal{L}_b^{(\beta) \dagger}-\mathcal{L}_{b}^{(0) \dagger }\|_{\infty\to\infty}\le \eta(\beta)\,,
\end{align}
where $\mathcal{L}_b^{\beta,r\dagger}$ corresponds to the local Lindbladian $\mathcal{L}_b^{(\beta)\dagger}$ with associated Hamiltonian $H$ replaced by $H_{B_b(r)}$, i.e.~the sum of interactions supported in a ball of radius $r$ aroud site $b$. Above,
$\zeta$ is a fast decaying function with $\sum_{r\ge r_0}\zeta(r)\equiv \Delta(r_0)<\infty$, whereas $\eta(\beta)$ is an increasing functions of $\beta$ such that $\eta(0)=0$. Then, we have that 
\begin{align}
	\|[\delta_a, \mathcal{L}_b^{(\beta)\dagger}](X)\|_\infty&= \|[\delta_a, (\mathcal{L}_b^{(\beta)\dagger}-\mathcal{L}_b^{\beta,\operatorname{dist}(a,b)\dagger})](X)\|_\infty\\
	&\le \|\mathcal{L}_b^{(\beta)\dagger}-\mathcal{L}_b^{\beta,\operatorname{dist}(a,b)\dagger}\|_{\infty\to\infty}\|\delta_a(X)\|_\infty+2\|(\mathcal{L}_b^{(\beta)\dagger}-\mathcal{L}_b^{\beta,\operatorname{dist}(a,b)\dagger})(X)\|_\infty\\
	&\le \sum_{r>\operatorname{dist}(a,b)} \zeta(r)\|\delta_a(X)\|_\infty+  2 \sum_{r>\operatorname{dist}(a,b)} \|(\mathcal{L}_b^{\beta,r\dagger}-\mathcal{L}_b^{\beta,r-1\dagger})\delta_{B_b(r)}(X)\|_\infty\\
	&\le \Delta(\operatorname{dist}(a,b))\|\delta_a(X)\|_\infty+  2 \sum_{r>\operatorname{dist}(a,b)} \zeta(r)\,\sum_{\operatorname{dist}(b,c)\le r}\|\delta_{c}(X)\|_\infty\\
	&=\Delta(\operatorname{dist}(a,b))\|\delta_a(X)\|_\infty+  2 \sum_c \|\delta_{c}(X)\|_\infty\Delta(\max(\operatorname{dist}(a,b),\operatorname{dist}(b,c)))\,.
\end{align}
For a parameter $r_0$ to be chosen later and $\operatorname{dist}(a,b)>r_0$, we choose $\kappa_{a,b}^a=\Delta(\operatorname{dist}(a,b))$ and $\kappa_{a,b}^c=2 \Delta(\max(\operatorname{dist}(a,b),\operatorname{dist}(b,c)))$ for any other site $c$ with $c\ne a$. In contrast, whenever $\operatorname{dist}(a,b)\le r_0$, we control the above commutator as follows
\begin{align}
	&\|[\delta_a, \mathcal{L}_b^{(\beta)\dagger}](X)\|_\infty\le \|[\delta_a, (\mathcal{L}_b^{(\beta)\dagger}-\mathcal{L}_b^{\beta,r_0\dagger})](X)\|_\infty+\|[\delta_a, (\mathcal{L}_b^{\beta,r_0\dagger}-\mathcal{L}_b^{0,r_0\dagger})](X)\|_\infty\\
	&\qquad\le (\eta(\beta)+\Delta(r_0))\|\delta_a(X)\|_\infty+ 2\sum_{r>r_0}\zeta(r)\!\!\!\!\sum_{\operatorname{dist}(c,b)\le r}\|\delta_c(X)\|_\infty+2\eta(\beta)\!\!\!\!\!\sum_{\operatorname{dist}(b,c)\le r_0}\!\!\!\!\!\|\delta_c(X)\|_\infty\\
	&\qquad\le (\eta(\beta)+\Delta(r_0))\|\delta_a(X)\|_\infty+ 2\sum_c\Delta(\max(r_0,\operatorname{dist}(c,b)))\|\delta_c(X)\|_\infty+2\eta(\beta)\!\!\!\!\!\sum_{\operatorname{dist}(b,c)\le r_0}\!\!\!\!\!\|\delta_c(X)\|_\infty\,.
\end{align}
Hence, for $\operatorname{dist}(a,b)\le r_0$, we choose $\kappa_{a,b}^c=3\eta(\beta)+3\Delta(r_0)$ for any site $c$ with $\operatorname{dist}(c,b)\le r_0$, and $\kappa_{a,b}^c=2\Delta(\operatorname{dist}(b,c))$ otherwise. Similarly, we control
\begin{align}
	\|\delta_a(\mathcal{L}_a^{(\beta)\dagger}-\mathcal{L}_a^{(0)\dagger})(X)\|_\infty&\le \eta(\beta)\sum_{\operatorname{dist}(a,c)\le r_0}\|\delta_c(X)\|_\infty+2\sum_{r\ge r_0}\zeta(r)\sum_{\operatorname{dist}(a,c)\le r}\|\delta_c(X)\|_\infty\\
	&\le \eta(\beta)\sum_{\operatorname{dist}(a,c)\le r_0}\|\delta_c(X)\|_\infty+2\sum_c\Delta(\max(r_0,\operatorname{dist}(a,c))){\|\delta_c(X)\|_\infty}\,,
\end{align}
so that $\gamma_a^c=\eta(\beta)+2 \Delta(r_0)$ if $\operatorname{dist}(a,c)\le r_0$, and $\gamma_a^c=2\Delta(\operatorname{dist}(a,c))$ otherwise. Then, for any site $c$,
\begin{align}
	\sum_a\sum_{b\ne a}\kappa_{a,b}^c+\sum_a \gamma_a^c &= \sum_{0<\operatorname{dist}(a,b)\le r_0}\kappa_{a,b}^c+\sum_{\operatorname{dist}(a,b)>r_0}\kappa^c_{a,b}+\sum_a\gamma_a^c\\
	&\le 3(\eta(\beta)+\Delta(r_0))|\{(a,b)|\,0<\operatorname{dist}(a,b)\le r_0,\,\operatorname{dist}(b,c)\le r_0\}|\\
	&+\sum_{\operatorname{dist}(b,c)>r_0 }(2|B_b(r_0)|+1)\,\Delta(\operatorname{dist}(b,c))\\
	&+2\sum_{\operatorname{dist} \label{eq:a27}(a,b)>r_0}\Delta(\max(\operatorname{dist}(a,b),\operatorname{dist}(b,c)))\\
	&+|\{a:\operatorname{dist}(a,c)\le r_0\}|\Big(\eta(\beta)+2\Delta(r_0)\Big)+2\sum_{a:\operatorname{dist}(a,c)>r_0}\Delta(\operatorname{dist}(a,c))\,.
\end{align}
Considering that $|B_b(r_0)| \le (2r_0+1)^D$, notice that for the term in Eq. \eqref{eq:a27} we have
\begin{align}
	\sum_{\operatorname{dist}(a,b)>r_0}\Delta(\max(\operatorname{dist}(a,b),\operatorname{dist}(b,c))) &= \sum_{\operatorname{dist}(b,c)>\operatorname{dist}(a,b)>r_0}\Delta(\operatorname{dist}(b,c)) +\sum_{\substack{\operatorname{dist}(a,b)>r_0\\ \operatorname{dist}(b,c) \le \operatorname{dist}(a,b)}}\Delta(\operatorname{dist}(a,b)) 
	\\& \le 
	\sum_{b,\operatorname{dist}(b,c)>r_0} \vert \{a \vert \operatorname{dist}(a,b)\le \operatorname{dist}(b,c)\}\vert\Delta(\operatorname{dist}(b,c))
	\\ & \quad \quad +
	\sum_{b} \sum_{\substack{l \ge r_0 \\ l \ge \operatorname{dist}(b,c) }} \vert \{a \vert \operatorname{dist}(a,b)=l\}\vert \Delta(l)
	\\& \le
	\sum_{l\ge r_0} (2l+1)^{2D-1} \Delta(l) +\sum_{b} \sum_{\substack{l \ge r_0 \\ l \ge \operatorname{dist}(b,c) }} (2l+1)^{D-1} \Delta(l)
	\\& \le
	\sum_{l\ge r_0} (2l+1)^{2D-1} \Delta(l) +\sum_{l'} (2l'+1)^{D-1}\sum_{\substack{l \ge r_0 \\ l \ge l' }} (2l+1)^{D-1} \Delta(l)
	\\& \le
	\sum_{l\ge r_0} (2l+1)^{2D-1} \Delta (l) + r_0 
	\sum_{l \ge r_0} (2l+1)^{2D-2} \Delta (l) + \sum_{l' \ge r_0} \sum_{l \ge l'} (2l+1)^{2D-2} \Delta(l).
\end{align}
By simplifying these expressions we can bound, in terms of $\eta(\beta)$, $\Delta(l)$ and $r_0$, 
\begin{align}
	\sum_a\sum_{b\ne a}\kappa_{a,b}^c+\sum_a \gamma_a^c & \le 4(2r_0+1)^{2D} \eta(\beta) + 5  (2r_0+1)^{2D} \Delta(r_0) \\&+\left( 5+2r_0+2(2r_0+1)^{D} \right)\sum_{l \ge r_0} (2l+1)^{2D-1}\Delta(l) + 2 \sum_{l'\ge r_0} \sum_{l \ge l'} (2l+1)^{2D-2} \Delta(l).\nonumber
	\\ & \equiv 4(2r_0+1)^{2D} \eta(\beta) + f(r_0).
\end{align}

Hence, for $\kappa$ to be suitably small we need to show that this expression is bounded away from $\lambda$. This requires explicit expressions for $\eta(\beta)$ and $\Delta(r)$, and taking $r_0$ large enough and choose $\beta$ small enough accordingly.  

\section{Locality analysis}\label{sec:local}

\subsection{Decay of generators with the distance}

We here focus on the $(k,l)$-local Hamiltonians defined in the main text. For such Hamiltonians, given the so-called Lieb-Robinson velocity
$v_{\text{LR}} := \max_{u\in \Lambda} \sum_{ Z\ni u} \vert Z \vert \norm{h_Z}_\infty \le hkl \equiv J$,
it is well-known that, for any operator $A^a$ supported on site $a \in \Lambda$ \cite{lieb1972finite,hastings2004lieb,nachtergaele2006lieb,HastingsKoma2006,Haah_2021}:
\begin{equation}\label{eq:LiebRobinson}
	\norm{e^{-iHt}A^a e^{iHt}-e^{-iH_{B_a(r)}t}A^a e^{iH_{B_a(r)}t}}_\infty \le \norm{A^a}_\infty \frac{(2 v_{\text{LR}}\vert t \vert)^r}{r!}\le \norm{A^a}_\infty \frac{(2J \vert t \vert)^r}{r!} \,,
\end{equation}
where $B_a(r)$ denotes the ball of radius $r$ with respect to the graph distance centered at $a$, so that $ H_{B_b(r)}$ consists of the Hamiltonian terms on a ball of radius $r$ centered at the support of $A^{a}$. In what follows, in order to ease notation, we will be identifying $H_r \equiv H_{B_a(r)}$.

We aim to show that there is a sequence of Lindbladians with jump operators with support on a region of radius $r$ such that, for a fast-decaying function $\zeta(r)$,
\begin{align}
	&\|\mathcal{L}_a^{\beta,r\dagger}-\mathcal{L}_{a}^{\beta,r-1\dagger }\|_{\infty\to\infty}\le \zeta(r)\,.\end{align}
The Lindbladian $\mathcal{L}_a^{\beta,r\dagger}$ is the same as the one defined in the main text, but where the jump operators are instead
\begin{align}
	A^{a,\alpha}_r(\omega):=\frac{1}{\sqrt{2\pi}}\int_{-\infty}^{\infty} e^{iH_rt}A^{a,\alpha} e^{-iH_r t}e^{-i\omega t}\,f(t)\,dt\,,
\end{align}
and the coherent part is 
\begin{align}
	B^\beta_{a,\alpha,r} &\equiv \int_{-\infty}^\infty b_1(t) e^{-i\beta H_r t} \int_{-\infty}^\infty b_2(t') e^{i\beta H_r t'}A^{a,\alpha} e^{-2i\beta H_r t'}A^{a,\alpha} e^{i\beta H_r t'}dt' e^{i\beta H_r t}dt \\ & = \int_{-\infty}^\infty  \int_{-\infty}^\infty b_1(t) b_2(t') A_r^{a,\alpha} (t-t') A_r^{a,\alpha}(t+t') dt' dt\,.
\end{align}

First, notice that, given the Lieb-Robinson bound, 
\begin{align}
	\norm{A^{a,\alpha}_r(\omega) - A^{a,\alpha}_{r+1}(\omega)}_\infty &\le  \frac{\norm{A^{a,\alpha}}_\infty}{\sqrt{2 \pi}} \int_{-t_0}^{t_0} f(t) \frac{(2 J \vert t \vert )^r}{r!} \text{d} t +  \frac{\norm{A^{a,\alpha}}_\infty}{\sqrt{\pi/2}} \int_{t_0}^\infty  f(t) \text{d} t
	\\ & \le \norm{A^{a,\alpha}}_\infty \sqrt{\beta / \sqrt{2 \pi} } \frac{(2 J e  \vert t_0 \vert )^r}{r^{r}} +  \norm{A^{a,\alpha}}_\infty \frac{\beta^{3/2}}{(2 \pi^3)^{1/4} t_0} e^{-\frac{t_0^2}{\beta^2}}\,,
\end{align}
where in the last line we used that $r! \ge \frac{r^r}{e^{r-1}}$ and the upper bound on the complementary error function $\text{Erfc}(x) \le \frac{e^{-x^2}}{\sqrt{\pi}x}$.
Let us choose $t_0 = \frac{r g(\beta J)} {2J e}$ with $g(x)=\frac{\sqrt{x}}{1+\sqrt{x}}$, so that
\begin{align}
	\norm{A^{a,\alpha}_r(\omega) - A^{a,\alpha}_{r+1}(\omega)}_\infty &\le \sqrt{\beta} \norm{A^{a,\alpha}}_\infty \left ( \frac{g(\beta J)^r}{ (2 \pi)^{1/4} } +  \frac{  \beta J e }{( \pi/2)^{3/4} r g(\beta J)} e^{-\frac{r^2 g(\beta J)^2}{(2 \beta J e )^2}} \right) \\ & \equiv \frac{\sqrt{\beta}}{ 2 (2 \pi)^{3/4} } \norm{A^{a,\alpha}}_\infty  \xi_1(r,\beta J)
\end{align}
Thus, considering that $\norm{\int_{-\infty}^{\infty} \gamma (\omega) A^{a,\alpha}_r(\omega)  \text{d} \omega}_\infty \le \norm{A^{a,\alpha}}_\infty (2 \pi)^{3/4} /\sqrt{\beta}  $, we have that 
\begin{align}
	\norm{  \mathcal{L}^{a,\alpha}_{d,r}-\mathcal{L}^{a,\alpha}_{d,r+1} }_{\infty\to\infty} \le  \norm{A^{a,\alpha}}_\infty^2  \xi_1(r,\beta J)\,.
\end{align}

For the coherent part, we have that 
\begin{align}
	\norm{  B^\beta_{a,\alpha,r}- B^\beta_{a,\alpha,r+1} }_\infty \le \int_{-\infty}^\infty  \int_{-\infty}^\infty \vert b_1(t) b_2(t') \vert \norm{ A_r^{a,\alpha} (t-t') A_r^{a,\alpha}(t+t') -  A_{r+1}^{a,\alpha} (t-t') A_{r+1}^{a,\alpha}(t+t') }_\infty dt' dt .
\end{align}
We divide the integral with a different cut-off $t_0$ as $$ \int_{-\infty}^\infty  \int_{-\infty}^\infty = \int_{-t_0}^{t_0}  \int_{-t_0}^{t_0} + \int_{t_0}^\infty  \int_{-\infty}^\infty + \int_{-\infty}^\infty\int_{t_0}^\infty + \int_{t_0}^\infty \int_{t_0}^\infty +\quad ...$$ so that
\begin{align}
	\int_{-t_0}^{t_0}  \int_{-t_0}^{t_0}   \vert b_1(t) b_2(t') \vert \norm{ A_r^{a,\alpha} (t-t') A_r^{a,\alpha}(t+t') -  A_{r+1}^{a,\alpha} (t-t') A_{r+1}^{a,\alpha}(t+t') }_\infty dt' dt \le \frac{e^{1/8}}{2 \sqrt{2}} \frac{(4 \beta J   e \vert t_0 \vert )^r}{r^{r}} \norm{A^{a,\alpha}}_\infty^2,
\end{align}
where we used the Lieb-Robinson bound and that $\int_{-\infty}^\infty  \int_{-\infty}^\infty \vert b_1(t) b_2(t') \vert dt' dt \le\frac{e^{1/8}}{4 \sqrt{2}}$. For the other contributions, where long times contribute, we simply bound $\norm{ A_r^{a,\alpha} (t-t') A_r^{a,\alpha}(t+t') -  A_{r+1}^{a,\alpha} (t-t') A_{r+1}^{a,\alpha}(t+t') }_\infty \le 2 \norm{A^{a,\alpha}}_\infty^2 $. Considering the tail bounds
\begin{align}
	&\int_{t_0}^\infty \vert b_1(t) \vert \text{d}t \le \frac{e^{1/8}}{2 \sqrt{2}}\left(e^{\frac{\pi^2}{2}-2 \pi t_0}+2 \text{Erfc}(\sqrt{2} t_0)  \right)  \\
	& \int_{t_0}^\infty \vert b_2(t) \vert \text{d}t \le \frac{\text{Erfc}(2 t_0)}{8 \pi}, 
\end{align}
we can upper bound the contribution of all the tail terms in the integral as
\begin{align}
	\left(  \int_{-\infty}^\infty\int_{t_0}^\infty + \int_{t_0}^\infty \int_{t_0}^\infty +... \right) \vert b_1(t) b_2(t') \vert dt' dt \le
	\frac{e^{1/8} }{ 2 \pi \sqrt{2}}\left(e^{\frac{\pi^2}{2}-2 \pi t_0}+4 \frac{e^{-2 t_0^2}}{\sqrt{2 \pi} t_0}   \right).
\end{align}
Now similarly choosing $t_0 = g(\beta J) r/ 4 \beta J e   $ we obtain

\begin{align}
	\norm{  B^\beta_{a,\alpha,r}- B^\beta_{a,\alpha,r+1} }_\infty & \le \frac{e^{1/8} \norm{A^{a,\alpha}}_\infty^2}{2 \sqrt{2}} \left ( g(\beta J)^r + e^{\frac{\pi^2}{2}- \frac{\pi g(\beta J)}{2 \beta J e } r  }+\frac{8 \beta J e  }{\sqrt{2 \pi} r g(\beta J) } e^{- \frac{(r g(\beta J))^2}{2 ( \beta J  e )^2}}  \right) \\& \equiv \frac{\norm{A^{a,\alpha}}_\infty^2}{2} \xi_2(r,\beta J).
\end{align}

Assuming that $\norm{A^{a,\alpha}}_\infty \le 1$, this means that we can take the function $ \zeta(r)$ to be
\begin{align}
	\zeta(r) = \xi_1(r,\beta J) +\xi_2(r,\beta J)\,,
\end{align}
which is exponentially decaying in $r$ and vanishes as $\beta \rightarrow 0$.
Denoting $\upsilon:=\beta J/g(\beta J)=\sqrt{\beta J}(1+\sqrt{\beta J})$ and taking $r\ge 1$, a crude bound on $\zeta$ is
\begin{align} \label{eq:3terms}
	\zeta(r)&\le 7g(\beta J)^r+23\upsilon e^{-\frac{r^2 }{4 e^2 \upsilon^2}}+112 e^{-\frac{\pi r}{e \upsilon }}.
\end{align}
As long as $\beta J\le 1/200$, it can be verified numerically that the first term in Eq. \eqref{eq:3terms} dominates (it is enough to check for $r=1$), so that the other two contribute at most as $7g(\beta J)^r$. This means we have
\begin{align}
	\zeta(r)&\le 14 g(\beta J)^r,
\end{align}
and hence,
\begin{align}
	\Delta(r_0)=\sum_{r\ge r_0}\zeta(r)&\le \frac{14 g(\beta J)^{r_0}}{1-g(\beta J)}.
\end{align}



\subsection{Distance to the depolarizing limit}
We now need to find a function such that
\begin{align}
	\|\mathcal{L}_a^{\beta \dagger}-\mathcal{L}_{a}^{0 \dagger }\|_{\infty\to\infty}\le \eta(\beta)\,.\end{align}
We proceed as above by treating the dissipative and the coherent parts separately.
First notice that the jump operators are such that

\begin{align}
	\norm{A^{a,\alpha}(\omega)- \tilde{f}(\omega)A^{a,\alpha}}_\infty &\le \frac{1}{\sqrt{2\pi}}\int_{-\infty}^{\infty} \norm{A^{a,\alpha}(-t)-A^{a,\alpha} }_\infty \,f(t)\,dt\qquad 
	\\ & \le \frac{\norm{[H,A^{a,\alpha}]}_\infty}{\sqrt{2\pi}}\int_{-\infty}^{\infty}  \vert t \vert  f(t) \text{d}t
	\\ & \le \frac{1}{2^{\frac{1}{4}} \pi^{\frac{3}{4}}} 2 \beta^{3/2} J \norm{A^{a,\alpha}}_\infty,
\end{align}
where $\tilde{f}(\omega)= \frac{1}{\sqrt{2 \pi}} \int^{\infty}_{-\infty}f(t) e^{-i \omega t}=\frac{\sqrt{\beta}}{(2 \pi)^{1/4}} e^{\frac{-\beta^2 \omega^2}{4}}$, and
where we used the fact that $\norm{A(t)-A}_\infty\le \vert t \vert \norm{[H,A]}_\infty\le 2J\vert t \vert \norm{A}_\infty $. Thus, by the same argument as above, we have, for the dissipative part,
\begin{align}
	\|\mathcal{L}_{a,d}^{\beta \dagger}-\mathcal{L}_{a,d}^{0 \dagger }\|_{\infty\to\infty}\le \frac{2^{5/2}}{\sqrt{3 \pi}e^{1/6}} \beta J.
\end{align}

Now for the coherent part, we have that, for $\lambda =\int_{-\infty}^\infty  \int_{-\infty}^\infty  b_1(t) b_2(t')  dt' dt$, since $(A^{a,\alpha})^2=\mathbb{I}$,
\begin{align}
	\norm{  B^\beta_{a,\alpha}- \lambda I }_\infty &\le \int_{-\infty}^\infty  \int_{-\infty}^\infty \vert b_1(t) b_2(t') \vert {\norm{A^{a,\alpha} (-\beta t') A^{a,\alpha}(\beta t')- I }_\infty}\,dt' dt
	\\ &
	\le 4 \beta J \norm{A^{a,\alpha}}_\infty^2 \int_{-\infty}^\infty  \int_{-\infty}^\infty \vert b_1(t) b_2(t') \vert \vert t'\vert dt' dt = \frac{e^{\frac{1}{8}}}{2 \sqrt{2 \pi}} \beta J \norm{A^{a,\alpha}}_\infty^2.
\end{align}
This means that we can simply take
\begin{align}
	\eta(\beta) &:=  2 \beta J.
\end{align}

\subsection{Long range Hamiltonians}\label{app:longrange}

Here we show how the previous estimates on $\eta(\beta)$ and $\Delta(r)$ generalize to long range Hamiltonians. Their definition implies that for $A^{a,\alpha}$ being single-body Paulis, $\norm{[H,A^{a,\alpha}]} \le 2 g$, and so we can automatically establish that for long range systems we can choose $   \eta(\beta):=  \beta g$. For estimating $\Delta(r)$, we resort to existing Lieb-Robinson bounds for those models \cite{Kuwahara2019LongRange}, which read
\begin{equation}\label{eq:LiebRobinson}
	\norm{e^{-iHt}A^u e^{iHt}-e^{-iH_{B_u(r)}t}A^u e^{iH_{B_u(r)}t}}_\infty \le \norm{A^u}_\infty C_H  \vert t \vert^{D+1} \left (r - v_{\text{LR}} t \right)^{-\nu+D}.
\end{equation}
The constants $C_H,v_{\text{LR}}$ now do not have explicit expressions, but depend only on $D, g_0$ and $\nu$ (see \cite{Kuwahara2019LongRange}). 

Repeating the proof above with this bound instead, and choosing $v_{\text{LR}} t_0=\sqrt{\beta g} r $ for $\beta g <1$, we obtain, for the dissipative part

\begin{align}
	\norm{  \mathcal{L}^{a,\alpha}_{d,r}-\mathcal{L}^{a,\alpha}_{d,r+1} }_{\infty\to\infty} \le  \norm{A^{a,\alpha}}_\infty^2  \left(2 \sqrt{2 \pi} \frac{C_H (\sqrt{\beta g}/ v_{\text{LR}})^{D+1}}{(r(1-\sqrt{\beta g}))^{\nu-2D-1}} +  \frac{4 \sqrt{2 \beta}  v_{\text{LR}}}{ \sqrt{ g} r} e^{-\frac{g r^2}{4\beta  v_{\text{LR}}^2}} \right),
\end{align}
and similarly for the coherent part, choosing instead $v_{\text{LR}} t_0=\sqrt{ g/\beta } r $
\begin{align}
	\norm{  B^\beta_{a,\alpha,r}- B^\beta_{a,\alpha,r+1} }_\infty & \le \frac{e^{1/8} \norm{A^{a,\alpha}}_\infty^2}{2 \sqrt{2}} \left ( \frac{C_H (\sqrt{\beta g}/ v_{\text{LR}})^{D+1}}{(r(1-\sqrt{\beta g}))^{\nu-2D-1}} + e^{\frac{\pi^2}{2}- \frac{\pi\sqrt{g}}{ v_{\text{LR}}\sqrt{\beta}} r  }+\frac{4 v_{\text{LR}} \sqrt{\beta}  }{\sqrt{2 \pi g}  r} e^{- \frac{2 g r^2}{\beta v_{\text{LR}}^2 }}  \right).
\end{align}
This means that, assuming $\beta g <1/4$, we can choose, for some constant $C'_H$,
\begin{align}
	\zeta(r)&\le C'_H \frac{(\beta g)^{\frac{D+1}{2}}}{r^{\nu-2D-1}},
\end{align}
so that, equivalently, assuming $\nu>2D+2$, there is a contant $K>0$ such that we can choose
\begin{align}
	\Delta(r_0) := K  (\beta g)^{\frac{D+1}{2}} \frac{1}{r_0^{\nu -2D-2 }}.
\end{align}


\section{Partition function estimation}\label{sec:partition_function}

The standard algorithm to estimate partition functions given access to (approximate) samples from Gibbs states from a temperature range $\beta\in[\beta_{\min},\beta_{\max}]$ assumes that we know the value of the partition function at $\beta_{\min}$. Then, for a so-called annealing schedule of length $l$ given by inverse temperatures $\beta_1=\beta_{\min}<\beta_1<\cdots<\beta_l=\beta_{\max}$, we sample from $\sigma_{\beta_k}$ and estimate the expectation value of the observable $e^{(\beta_{i}-\beta_{i+1})H}$ on it. It is then not difficult to see that $\tr{e^{(\beta_{i}-\beta_{i+1})H}\sigma_{\beta_i}}=\tfrac{Z_{\beta_{i+1}}}{Z_{\beta_i}}$, so we obtain an estimate of the partition function $Z_{\beta_{\max}}$ by considering the telescopic product
\begin{align}
	Z_{\beta_{\max}}=Z_{\beta_{\min}}\prod\limits_{i=1}^{l-1}\tfrac{Z_{\beta_{i+1}}}{Z_{\beta_i}}.
\end{align}
It is then important to design the schedule in a way that we can simultaneously ensure that it is as short as possible (minimize $l$) while being such that we do not need too many samples to estimate the ratio for each $\beta_k$.

In the quantum case, to the best of our knowledge, the only result on annealing schedules is that of~\cite{poulin_sampling_2009}, which considers schedules with $\beta_{k+1}-\beta_k=\mathcal{O}(\|H\|_\infty^{-1})$ and $l=\mathcal{O}((\beta_{\max}-\beta_{\min})\|H\|_\infty)$. However, as it is known in the classical case, this schedule can be quadratically worse than the optimal one~\cite{Stefankovic2007}. Another challenge the quantum version of partition function estimation poses is that, in general, we cannot directly measure the observable $e^{(\beta_{k}-\beta_{k+1})H}$, as required by the annealing method. The work of~\cite{poulin_sampling_2009} already gave a quantum algorithm based on quantum phase estimation to deal with this issue. 

Here we take a simplified and self-contained route by using block encodings~\cite{Gilyn2019}. We note that several other papers used similar ideas to obtain quantum algorithms to estimate partition functions. However they focused on the regime where the sample complexity is exponential in the system size~\cite{randomized_matrix_signal,Jackson2023,Chowdhury2021}.

We now assume we have access to a block encoding of $\frac{H}{\|H\|_\infty}$~\cite{Gilyn2019}. Furthermore, we assume we are given a so-called balanced cooling schedule~\cite{Stefankovic2007}, i.e., a sequence of length $l$ with $\beta_{\min}=\beta_1 \leq\ldots\leq\beta_l=\beta_{\max}$ that satisfies for all $1\leq i\leq l-1$ and a parameter $B>0$ such that

\begin{align}\label{equ:good_schedule}
	\frac{Z_{\beta_{i}}}{Z_{\beta_{i+1}}} \leq B
\end{align}
The reason for this requirement is that, as we will see later, given access to copies of the $\sigma_{\beta}$ and the block encoding, it is possible to sample from a random variable $X_i$ that satisfies
\begin{align}\label{equ:variance_partition_function}
	\mathbb{E}\left(X_{i}\right)=\frac{Z_{\beta_{i+1}}}{Z_{\beta_{i}}},\quad \frac{\mathbb{E}\left(X_{i}^{2}\right)}{\mathbb{E}\left(X_{i}\right)^{2}} = \frac{Z_{\beta_{i}}}{Z_{\beta_{i+1}}} \leq B\,.
\end{align}
The conditions in Equation~\eqref{equ:variance_partition_function} can be combined with a standard result by Dyer and Frieze~\cite{dyer1991computing} to show how to obtain an estimator for the partition function $Z_{\beta_{\max}}$ by sampling from the $X_i$.
\begin{lemma}\label{lemma:frieze}
	Let $X_1,\ldots,X_l$ be independent random variables with 
	\begin{align}
		\frac{\mathbb{E}\left(X_{i}^{2}\right)}{\mathbb{E}\left(X_{i}\right)^{2}}\leq B
	\end{align}
	for all $1\leq i\leq l$. Let $\hat{X}=X_1X_2\cdots X_l$, $S_i$ be the average of $16Bl\epsilon^{-2}$ independent samples from $X_i$ and $\hat{S}=S_1S_2\cdots S_l$. Then
	\begin{align}
		\mathbb{P}((1-\epsilon)\mathbb{E}(\hat{X})\leq\hat{S}\leq (1+\epsilon)\mathbb{E}(\hat{X}))\geq 3/4.
	\end{align}
\end{lemma}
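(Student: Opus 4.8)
The plan is to prove Lemma~\ref{lemma:frieze} directly, following the classical argument of Dyer and Frieze. The statement is a purely probabilistic one about products of independent, nonnegative random variables whose normalized second moments are bounded by $B$; no quantum structure enters, so I would treat it as a concentration estimate for $\hat S$ around $\mathbb E(\hat X)$. The core idea is that each sample average $S_i$ is an unbiased estimator of $\mathbb E(X_i)$ with relative variance controlled by $B$ and the number of samples, and that taking $16 B l \epsilon^{-2}$ samples per factor shrinks each relative variance to order $\epsilon^2/l$; the product of $l$ such well-concentrated factors then stays within a $(1\pm\epsilon)$ window with probability at least $3/4$.

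Concretely, first I would record that $\hat S=S_1\cdots S_l$ is unbiased, $\mathbb E(\hat S)=\prod_i \mathbb E(S_i)=\prod_i \mathbb E(X_i)=\mathbb E(\hat X)$, using independence across the $i$ and within each batch. Next I would compute the relative second moment of the product. Because the batches are independent,
\begin{align}
\frac{\mathbb E(\hat S^2)}{\mathbb E(\hat S)^2}=\prod_{i=1}^{l}\frac{\mathbb E(S_i^2)}{\mathbb E(S_i)^2}.
\end{align}
For a single batch of size $N=16 B l \epsilon^{-2}$ I would expand $\mathbb E(S_i^2)$ using that $S_i$ is an average of i.i.d.\ copies of $X_i$, giving
\begin{align}
\frac{\mathbb E(S_i^2)}{\mathbb E(S_i)^2}=1+\frac{1}{N}\!\left(\frac{\mathbb E(X_i^2)}{\mathbb E(X_i)^2}-1\right)\le 1+\frac{B}{N}=1+\frac{\epsilon^2}{16 l}.
\end{align}
Taking the product over the $l$ batches and using $(1+x)^l\le e^{lx}$ with $x=\epsilon^2/(16l)$ yields
\begin{align}
\frac{\mathbb E(\hat S^2)}{\mathbb E(\hat S)^2}\le e^{\epsilon^2/16}\le 1+\tfrac{\epsilon^2}{8}
\end{align}
for $\epsilon$ in the relevant range, so the relative variance of $\hat S$ is at most $\epsilon^2/8$.

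Finally I would convert this variance bound into the claimed tail bound by Chebyshev's inequality applied to $\hat S/\mathbb E(\hat X)$: since $\mathrm{Var}(\hat S)/\mathbb E(\hat S)^2\le \epsilon^2/8$, we get
\begin{align}
\mathbb P\big(|\hat S-\mathbb E(\hat X)|>\epsilon\,\mathbb E(\hat X)\big)\le \frac{\mathrm{Var}(\hat S)}{\epsilon^2\,\mathbb E(\hat X)^2}\le \frac18<\frac14,
\end{align}
which is exactly the complementary event to the one in the lemma, giving success probability at least $3/4$. The only genuinely delicate point, and the place I would be most careful, is the per-batch second-moment identity: it relies on the $X_i$ being nonnegative (so that $\mathbb E(X_i)>0$ and the ratios are well defined) and on correctly accounting for the cross terms when squaring an average of i.i.d.\ variables. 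Everything else is bookkeeping — tracking the constant $16$ so that the final Chebyshev bound lands comfortably below $1/4$, and checking that the inequality $e^{\epsilon^2/16}\le 1+\epsilon^2/8$ holds on the intended range of $\epsilon$ (which it does for $\epsilon\le 1$).
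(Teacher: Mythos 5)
Your proof is correct, but it is worth noting how it relates to the paper: the paper never proves this lemma at all — it imports it verbatim as ``a standard result by Dyer and Frieze''~\cite{dyer1991computing}, and the appendix contains no proof environment for it. What you have done is reconstruct, correctly and self-containedly, the second-moment argument that underlies that citation. Your key identity for a batch of $N$ i.i.d.\ samples,
\begin{align}
\frac{\mathbb E(S_i^2)}{\mathbb E(S_i)^2}=1+\frac{1}{N}\left(\frac{\mathbb E(X_i^2)}{\mathbb E(X_i)^2}-1\right)\le 1+\frac{B}{N}=1+\frac{\epsilon^2}{16l},
\end{align}
the factorization of the relative second moment of $\hat S$ over independent batches, the bound $(1+\epsilon^2/(16l))^l\le e^{\epsilon^2/16}\le 1+\epsilon^2/8$ (valid well beyond $\epsilon\le 1$), and the final Chebyshev step are all sound; in fact your argument yields failure probability at most $1/8$, slightly stronger than the claimed $1/4$, so the constant $16$ has comfortable slack. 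Two small points deserve care. First, you are right to flag nonnegativity: the lemma as stated does not assume $X_i\ge 0$, and without $\mathbb E(\hat X)>0$ the two-sided interval $[(1-\epsilon)\mathbb E(\hat X),(1+\epsilon)\mathbb E(\hat X)]$ is not a meaningful relative-error window; in the paper's application the $X_i$ are Bernoulli, so this is harmless, but a fully rigorous statement should include it (Chebyshev itself needs only $\mathbb E(X_i)\neq 0$). Second, your per-batch identity requires each factor $\mathbb E(S_i^2)/\mathbb E(S_i)^2\ge 1$ (Cauchy--Schwarz) for the product bound to go through monotonically, which holds automatically — worth one line in a write-up. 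In short: the approach is not the paper's (the paper has none), but it is the standard and correct one, and it could be inserted as a proof of the lemma as stated.
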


Before we show how to sample from $X_{i}$, let us recall some well-known facts about block encodings of the exponential function.
\begin{lemma}[Block encoding of the exponential function~\cite{vanApeldoorn2020}]
	Let $H\geq 0$ be a Hamiltonian and $U$ a block encoding of $\tfrac{H}{\|H\|_\infty}$. For $\beta,\epsilon\geq0$ we can obtain an $\epsilon$ block encoding $U_\beta$ of $e^{-\beta H}$, i.e. a unitary satisfying
	\begin{align}\label{equ:definition_block}
		\|(I\otimes \bra{0})U_\beta(I\otimes \ket{0})-e^{-\beta H}\|_\infty\leq \epsilon
	\end{align}  
	from 
	\begin{align}
		\mathcal{O}(\beta\|H\|_\infty\log(\epsilon^{-1}))
	\end{align}
	uses of the block encoding of $H/\|H\|_\infty$.
\end{lemma}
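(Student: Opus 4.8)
The final statement is a standard consequence of the quantum singular value transformation (QSVT), so the plan is to realize $e^{-\beta H}$ as a bounded polynomial transformation of the block-encoded, normalized Hamiltonian. First I would normalize: set $\tilde H := H/\|H\|_\infty$, so that $U$ block-encodes $\tilde H$ and, because $H\ge 0$, the eigenvalues of $\tilde H$ lie in $[0,1]$. Since $e^{-\beta H}=e^{-\beta\|H\|_\infty\tilde H}=f(\tilde H)$ with $f(x):=e^{-\beta\|H\|_\infty x}$, and since $0\le f(x)\le 1$ for $x\in[0,1]$, the target is an admissible sub-normalized function for QSVT, and the problem reduces to a polynomial eigenvalue transformation of $\tilde H$.

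Next I would replace $f$ by a polynomial. By standard bounded polynomial approximations of the exponential (for instance a truncated Chebyshev expansion, as in the approximation-theory results underlying~\cite{vanApeldoorn2020}), there is a real polynomial $P$ with $|P(x)-f(x)|\le\epsilon$ for $x\in[0,1]$, with $|P(x)|\le 1$ for all $x\in[-1,1]$, and of degree $d=\mathcal{O}(\beta\|H\|_\infty\log(\epsilon^{-1}))$. I would then feed $P$ into the QSVT / quantum eigenvalue transformation theorem: this produces a unitary $U_\beta$ that block-encodes $P(\tilde H)$ using $d$ applications of $U$ and $U^\dagger$ together with $\mathcal{O}(d)$ controlled single-qubit rotations, which gives the claimed $\mathcal{O}(\beta\|H\|_\infty\log(\epsilon^{-1}))$ query count. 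Because the spectrum of $\tilde H$ is contained in $[0,1]$, the operator-norm error inherits the uniform bound $\|(I\otimes\bra{0})U_\beta(I\otimes\ket{0})-e^{-\beta H}\|_\infty\le\max_{x\in[0,1]}|P(x)-f(x)|\le\epsilon$, as required by~\eqref{equ:definition_block}.

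The main obstacle is the construction of the polynomial $P$ rather than the QSVT machinery, which I would treat as a black box. The subtlety is that $f(x)=e^{-\beta\|H\|_\infty x}$ grows above $1$ on the negative part of $[-1,1]$, whereas QSVT demands $\|P\|_{[-1,1]}\le 1$; one must therefore approximate $f$ well only on the physically relevant window $[0,1]$ while keeping $P$ globally bounded, typically via a suitable rescaling and truncation of the Chebyshev series (and, if a sharper degree is wanted, one may invoke the $\mathcal{O}(\sqrt{\beta\|H\|_\infty\log(\epsilon^{-1})}+\log(\epsilon^{-1}))$ bounds available for $e^{-x}$). A secondary bookkeeping point is the parity decomposition required by QSVT, splitting $P$ into even and odd parts or using the standard linear-combination-of-unitaries / amplitude-multiplexing trick, together with tracking the resulting constant sub-normalization factor, which affects only the success amplitude and not the stated asymptotic cost.
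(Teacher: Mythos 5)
The paper does not actually prove this lemma: it is imported directly from~\cite{vanApeldoorn2020}, so there is no internal proof to compare against. Your reconstruction---normalize $\tilde H = H/\|H\|_\infty$ so that $H\geq 0$ places the spectrum of $\tilde H$ in $[0,1]$, approximate $f(x)=e^{-\beta\|H\|_\infty x}$ on $[0,1]$ by a polynomial $P$ of degree $\mathcal{O}(\beta\|H\|_\infty\log(\epsilon^{-1}))$ that remains bounded by $1$ on all of $[-1,1]$, and feed $P$ into QSVT---is precisely the route taken in the cited reference and the QSVT framework underlying it, so in substance your proposal is the standard proof and is correct. Your side remarks are also accurate: the stated query count is loose (the Chebyshev-based bound $\mathcal{O}\bigl(\sqrt{\beta\|H\|_\infty\log(\epsilon^{-1})}+\log(\epsilon^{-1})\bigr)$ is available, but the lemma only claims the weaker one), and the genuine technical content lies in producing a globally bounded polynomial that is accurate on the physically relevant window $[0,1]$, not in the QSVT machinery itself.

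One point needs sharpening. You claim that the parity decomposition / LCU step introduces ``a constant sub-normalization factor, which affects only the success amplitude and not the stated asymptotic cost.'' As written, this is incompatible with the guarantee~\eqref{equ:definition_block}: if your circuit block-encodes $P(\tilde H)/2$ rather than $P(\tilde H)$, then
\begin{align*}
\bigl\|(I\otimes \bra{0})U_\beta(I\otimes \ket{0})-e^{-\beta H}\bigr\|_\infty \approx \tfrac{1}{2}\bigl\|e^{-\beta H}\bigr\|_\infty\,,
\end{align*}
which is a constant, not $\epsilon$; and the downstream use in Lemma~\ref{lemma:effect_block_encoding} relies on the un-subnormalized corner to identify the acceptance probability with $Z_{2\beta_1+\beta_2}/Z_{\beta_2}$. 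The fix is routine, but must be stated: either (i) implement the even and odd parts of $P$ via the real-part (definite-parity) QSVT construction and combine them so that $P(\tilde H)$ itself sits in the corner, or (ii) accept the factor $1/2$ and remove it with $\mathcal{O}(1)$ rounds of oblivious amplitude amplification, or (iii) restate the lemma as a block encoding of $e^{-\beta H}/c$ for a known constant $c$ and propagate $c$ through the partition-function estimator, where it only degrades the variance bound by a constant. Any of these closes the gap at no asymptotic cost, but without one of them the sub-normalization is not ``harmless'' for the operator-norm statement being proved.
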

This fact can be readily applied to sample from $X_i$:
\begin{lemma}\label{lemma:effect_block_encoding}
	For some $\beta_1,\beta_2>0$, let $U_{\beta_1}$ be an $\epsilon$ block encoding of $e^{-\beta_1 H}$ as in~\eqref{equ:definition_block}. Then
	\begin{align}
		\left|\tr{I \otimes \ketbra{0}{0}\left(U_{\beta_1}(\sigma_{\beta_2}\otimes\ketbra{0}{0})U_{\beta_1}^\dagger\right)} - \frac{Z_{2\beta_{1}+\beta_{2}}}{Z_{\beta_{2}}}\right| \leq 2\epsilon
	\end{align}
\end{lemma}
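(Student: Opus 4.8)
The plan is to reduce the acceptance probability on the left-hand side to an expression of the form $\tr{M^\dagger M\,\sigma_{\beta_2}}$, where $M:=(I\otimes\bra{0})U_{\beta_1}(I\otimes\ket{0})$ is the top-left block of the block encoding, and then to compare $M^\dagger M$ with the ideal operator $e^{-2\beta_1 H}$. The whole statement then splits into an exact algebraic identity plus a perturbative norm estimate.

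First I would carry out the ancilla bookkeeping. Writing $\sigma_{\beta_2}\otimes\ketbra{0}{0}=(I\otimes\ket{0})\,\sigma_{\beta_2}\,(I\otimes\bra{0})$ and using that tracing against $I\otimes\ketbra{0}{0}$ amounts to sandwiching by $I\otimes\bra{0}$ and $I\otimes\ket{0}$, the left-hand side collapses to $\tr{M\sigma_{\beta_2}M^\dagger}=\tr{M^\dagger M\,\sigma_{\beta_2}}$ by cyclicity of the trace. This is the standard post-selection identity for block encodings and involves no approximation.

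Next I would identify the ideal value. Since $H\ge 0$ is Hermitian, $e^{-\beta_1 H}$ is Hermitian and positive, so in the exact case $M=e^{-\beta_1 H}$ one has $M^\dagger M=e^{-2\beta_1 H}$; substituting $\sigma_{\beta_2}=e^{-\beta_2 H}/Z_{\beta_2}$ then yields exactly $\tr{e^{-(2\beta_1+\beta_2)H}}/Z_{\beta_2}=Z_{2\beta_1+\beta_2}/Z_{\beta_2}$, the target ratio. Hence the entire error comes from $M\neq e^{-\beta_1 H}$, and it remains to control $\lvert\tr{(M^\dagger M-e^{-2\beta_1 H})\sigma_{\beta_2}}\rvert$.

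The only place that needs genuine care is this last estimate, because I must recover the stated constant $2\epsilon$ rather than $2\epsilon+\epsilon^2$. Setting $E:=M-e^{-\beta_1 H}$ with $\norm{E}_\infty\le\epsilon$ by the block-encoding hypothesis, a symmetric expansion of $M^\dagger M$ leaves a quadratic term $E^\dagger E$. To avoid it I would telescope asymmetrically, $M^\dagger M-e^{-2\beta_1 H}=M^\dagger E+E^\dagger e^{-\beta_1 H}$, and invoke the two operator-norm bounds that come for free here: $\norm{M}_\infty\le\norm{U_{\beta_1}}_\infty=1$ (it is a corner of a unitary) and $\norm{e^{-\beta_1 H}}_\infty\le 1$ (from $H\ge 0$). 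Each summand then has norm at most $\epsilon$, so $\norm{M^\dagger M-e^{-2\beta_1 H}}_\infty\le 2\epsilon$, and the claim follows from Hölder's inequality $\lvert\tr{A\sigma_{\beta_2}}\rvert\le\norm{A}_\infty\norm{\sigma_{\beta_2}}_1=\norm{A}_\infty$ applied to $A=M^\dagger M-e^{-2\beta_1 H}$. The main obstacle is therefore not conceptual but the bookkeeping in this final step: hitting precisely $2\epsilon$ hinges on keeping one $M$-factor and one $e^{-\beta_1 H}$-factor in each cross term, so that the unit operator-norm bounds absorb them, instead of expanding symmetrically and incurring the spurious $\epsilon^2$.
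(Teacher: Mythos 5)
Your proof is correct, and it takes a genuinely different route from the paper's. The paper argues in the dilated space: it computes the acceptance probability exactly for a hypothetical \emph{exact} unitary block encoding $V_{\beta_1}$ of $e^{-\beta_1 H}$, and then asserts that $U_{\beta_1}$ approximates $V_{\beta_1}$ to $\epsilon$ in operator norm, hence the induced channels are close in diamond norm, perturbing the probability by at most $2\epsilon$. You instead project everything onto the system register: the post-selection identity reduces the left-hand side to $\tr{M^\dagger M\,\sigma_{\beta_2}}$ with $M=(I\otimes\bra{0})U_{\beta_1}(I\otimes\ket{0})$, and the error is controlled by the asymmetric telescoping $M^\dagger M-e^{-2\beta_1 H}=M^\dagger E+E^\dagger e^{-\beta_1 H}$ together with $\norm{M}_\infty\le 1$, $\norm{e^{-\beta_1 H}}_\infty\le 1$ and H\"older. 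Your version is not just equivalent bookkeeping; it is more faithful to the hypothesis and arguably repairs a gap. The definition \eqref{equ:definition_block} only constrains the corner block of $U_{\beta_1}$, and closeness of corner blocks does not imply closeness of the full unitaries to any exact dilation: e.g.\ for one-qubit blocks, a unitary whose $(1,1)$ entry is $1-\epsilon$ is at distance of order $\sqrt{\epsilon}$, not $\epsilon$, from every unitary with $(1,1)$ entry exactly $1$, so the paper's intermediate claim $\norm{U_{\beta_1}-V_{\beta_1}}_\infty\le\epsilon$ can fail, and following that route honestly would degrade the bound to $O(\sqrt{\epsilon})$. Both arguments generate the factor $2$ the same way at heart---one factor of $\epsilon$ per side of the conjugation, with a unit-norm operator absorbing the other factor---but yours does so using only the assumed corner-block bound, and hence actually delivers the stated $2\epsilon$.
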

\begin{proof}
	If we had an exact block encoding of $e^{-\beta_{1} H}$, denote it by $V_{\beta_{1}}$, then
	\begin{align}
		&\tr{I \otimes \ketbra{0}{0}\left(V_{\beta_1}(\sigma_{\beta_2}\otimes\ketbra{0}{0})V_{\beta_1}^\dagger\right)}\\
		&\qquad =\tr{I \otimes \ketbra{0}{0}\left(V_{\beta_1}(\sigma_{\beta_2}\otimes\ketbra{0}{0})V_{\beta_1}^\dagger\right) I \otimes \ketbra{0}{0}}\\
		&\qquad =\tr{e^{-\beta_1 H}\otimes \ketbra{0}{0}\left(\frac{e^{-\beta_2H}}{\tr{e^{-\beta_2 H}}}\otimes \ketbra{0}{0}\right)e^{-\beta_1 H}\otimes \ketbra{0}{0}}\\
		&\qquad =\frac{Z_{2\beta_{1}+\beta_{2}}}{Z_{\beta_{2}}}.
	\end{align}
	The lemma then follows from the fact that $U_{\beta_1}$ approximates $V_{\beta_1}$ up to $\epsilon$ in operator norm, and thus also in diamond norm.
\end{proof}
From an approximate block encoding $U_i$ of $e^{-\tfrac{(\beta_{i+1}-\beta_{i})H}{2}}$, and given a copy of $\sigma_{\beta_i}$, we let $X_i=1$ if we measure $0$ in the auxilliary system after applying the block encoding $U_i$ to $\sigma_{\beta_i}\otimes\ketbra{0}{0}$ and $X_i=0$ else.
We then have:
\begin{cor}
	For $X_i$ defined as above, an annealing schedule $\beta_1<\beta_2<\cdots<\beta_l$ satisfying \eqref{equ:good_schedule} and $\epsilon\leq \frac{B^{-1}}{4}$  we have
	\begin{align}
		\left|\mathbb{E}(X_i)-\frac{Z_{\beta_{i+1}}}{Z_{\beta_{i}}}\right|\leq 2\epsilon,\quad \frac{\mathbb{E}\left(X_{i}^{2}\right)}{\mathbb{E}\left(X_{i}\right)^{2}}\leq 2B.
	\end{align}
\end{cor}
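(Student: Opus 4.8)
The plan is to obtain the two claimed inequalities as essentially immediate consequences of Lemma~\ref{lemma:effect_block_encoding} combined with the defining property of the balanced cooling schedule in~\eqref{equ:good_schedule}. The crucial observation is that $X_i$ is a Bernoulli random variable: by construction it takes the value $1$ precisely when the auxiliary register is measured in $\ket{0}$ and $0$ otherwise, so $\mathbb{E}(X_i)$ equals the probability of that outcome, namely $\tr{I\otimes\ketbra{0}{0}\,U_i(\sigma_{\beta_i}\otimes\ketbra{0}{0})U_i^\dagger}$.

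For the first inequality, I would apply Lemma~\ref{lemma:effect_block_encoding} with the substitution $\beta_1\mapsto(\beta_{i+1}-\beta_i)/2$ and $\beta_2\mapsto\beta_i$, so that the approximate block encoding $U_i$ of $e^{-(\beta_{i+1}-\beta_i)H/2}$ plays the role of $U_{\beta_1}$. Since then $2\beta_1+\beta_2=(\beta_{i+1}-\beta_i)+\beta_i=\beta_{i+1}$, the lemma yields directly $|\mathbb{E}(X_i)-Z_{\beta_{i+1}}/Z_{\beta_i}|\le 2\epsilon$, which is exactly the first bound.

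For the second inequality, I would use that a Bernoulli variable satisfies $X_i^2=X_i$, so that $\mathbb{E}(X_i^2)/\mathbb{E}(X_i)^2=1/\mathbb{E}(X_i)$, reducing the problem to a lower bound on $\mathbb{E}(X_i)$. Combining the first inequality with the triangle inequality and the schedule condition $Z_{\beta_i}/Z_{\beta_{i+1}}\le B$ (equivalently $Z_{\beta_{i+1}}/Z_{\beta_i}\ge B^{-1}$) gives $\mathbb{E}(X_i)\ge B^{-1}-2\epsilon$; invoking the hypothesis $\epsilon\le B^{-1}/4$ then forces $\mathbb{E}(X_i)\ge B^{-1}/2$, whence $1/\mathbb{E}(X_i)\le 2B$, as claimed.

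I do not anticipate any substantial obstacle: the only subtlety is the bookkeeping of which inverse temperature sits in the exponent of the block encoding versus the prepared Gibbs state, ensuring the factor of two in $2\beta_1+\beta_2$ correctly lands on $\beta_{i+1}$. The appearance of $2\epsilon$ rather than $\epsilon$ in the error is inherited from Lemma~\ref{lemma:effect_block_encoding}, which already accounts for the two insertions of the approximate (rather than exact) block encoding $U_i$ flanking $\sigma_{\beta_i}$.
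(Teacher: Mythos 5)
Your proposal is correct and follows essentially the same route as the paper's proof: apply Lemma~\ref{lemma:effect_block_encoding} (with $\beta_1=(\beta_{i+1}-\beta_i)/2$, $\beta_2=\beta_i$) for the expectation bound, then use the Bernoulli identity $\mathbb{E}(X_i^2)/\mathbb{E}(X_i)^2=1/\mathbb{E}(X_i)$ together with $\mathbb{E}(X_i)\geq B^{-1}-2\epsilon\geq \frac{1}{2B}$ for the relative-variance bound. The only difference is that you spell out the substitution into the lemma explicitly, which the paper leaves implicit.
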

\begin{proof}
	The claim on the expectation value of $X_i$ directly follows from Lemma~\ref{lemma:effect_block_encoding} and the bound on the relative variance by combining \eqref{equ:good_schedule} with the assumption that $\epsilon\leq \frac{B^{-1}}{4}$. Indeed, we have that 
	\begin{align}
		\frac{\mathbb{E}\left(X_{i}^{2}\right)}{\mathbb{E}\left(X_{i}\right)^{2}}=\frac{1}{\mathbb{E}\left(X_{i}\right)},
	\end{align}
	as $X_i$ are Bernoulli random variables. From our assumptions on the schedule and $\epsilon$ we obtain
	\begin{align}
		\mathbb{E}\left(X_{i}\right)\geq \frac{Z_{\beta_{i+1}}}{Z_{\beta_{i}}}-2\epsilon\geq B^{-1}-2\epsilon\geq \frac{1}{2B}.
	\end{align}
\end{proof}
Combining all of these statements we obtain the final result.
\begin{thm}[Partition function estimation]
	For a Hamiltonian $H$, $\beta_{\min}=0$ and $\beta_{\max}\leq\beta^*$ for which rapid mixing holds, let $\beta_{\min}=\beta_1<\cdots<\beta_l=\beta_{\max}$ be an annealing schedule satisfying \eqref{equ:good_schedule} for some $B>0$. Then we can obtain an estimate $\hat{Z}_{\beta_{\max}}$ of $Z_{\beta_{\max}}$ up to a relative error $\epsilon>0$ with probability of success at least $3/4$ from 
	\begin{align}
		\widetilde{\mathcal{O}}(n l^2B\epsilon^{-2})
	\end{align}
	uses of a block encoding of $\tfrac{H}{\|H\|_\infty}$ and Hamiltonian evolution time for $H$, where $\widetilde{\mathcal{O}}$ hides polylogarithmic terms in $\epsilon, l$ and $B$. In particular, for short-range lattice Hamiltonians there is a choice of parameters leading to an algorithm with runtime
	\begin{align}
		\widetilde{\mathcal{O}}(n^3\epsilon^{-2})\,.
	\end{align}
	and for long-range evolutions we have
	\begin{align}
		\widetilde{\mathcal{O}}(n^4\epsilon^{-2})\,.
	\end{align}
\end{thm}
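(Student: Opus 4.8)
The plan is to assemble the telescoping-product estimator from the building blocks already established. Since $\beta_{\min}=0$, the base value $Z_{\beta_{\min}}=\tr{\one}=2^n$ is known exactly, so it suffices to estimate $Z_{\beta_{\max}}/Z_{\beta_{\min}}=\prod_{i=1}^{l-1}Z_{\beta_{i+1}}/Z_{\beta_i}$. For each consecutive pair I would draw samples from the Bernoulli variable $X_i$ built in the preceding corollary, which requires (i) a copy of $\sigma_{\beta_i}$, produced by the rapid-mixing Gibbs sampler of Theorem~\ref{thmrmhighT1} (resp.~\ref{thm.longrange}) at cost $\widetilde{\mathcal{O}}(n)$ per copy, and (ii) one application of an approximate block encoding $U_i$ of $e^{-(\beta_{i+1}-\beta_i)H/2}$, obtained from the block-encoding-of-the-exponential lemma at cost $\cO((\beta_{i+1}-\beta_i)\norm{H}_\infty\log\epsilon_0^{-1})$ uses of the block encoding of $H/\norm{H}_\infty$. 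The corollary guarantees that $\mathbb{E}(X_i)$ is within $2\epsilon_0$ of $Z_{\beta_{i+1}}/Z_{\beta_i}$ with relative second moment at most $2B$; feeding these into the Dyer--Frieze estimator of Lemma~\ref{lemma:frieze} (with $B$ replaced by $2B$) and taking $\hat S=\prod_i S_i$ yields $\hat Z_{\beta_{\max}}=2^n\hat S$.

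The first thing to nail down is the error budget, which has three independent sources that must each be pushed below $\cO(\epsilon)$. Lemma~\ref{lemma:frieze} contributes a multiplicative statistical error $(1\pm\epsilon)$ from the sampling. The finite block-encoding precision $\epsilon_0$ and the finite Gibbs-preparation precision $\epsilon_1$ (i.e.\ $\norm{\tilde\sigma_{\beta_i}-\sigma_{\beta_i}}_1\le\epsilon_1$, achievable by running the sampler for time $\widetilde{\mathcal{O}}(\log(n/\epsilon_1))$ per the mixing bound) each bias every factor. Because the balanced-schedule condition \eqref{equ:good_schedule} forces each true ratio to be at least $B^{-1}$, the per-factor relative bias is $\cO((\epsilon_0+\epsilon_1)B)$, so over $l$ factors the accumulated relative bias is $\cO((\epsilon_0+\epsilon_1)Bl)$. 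Choosing $\epsilon_0,\epsilon_1=\Theta(\epsilon/(Bl))$ keeps this below $\epsilon$ at only polylogarithmic overhead, since both precisions enter through $\log\epsilon_0^{-1}$ and $\log(n/\epsilon_1)$. Combining the statistical and bias contributions and rescaling $\epsilon$ by a constant gives relative error $\epsilon$ with success probability $3/4$.

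For the resource count, the total number of samples is $(l-1)\times\cO(Bl\epsilon^{-2})=\cO(Bl^2\epsilon^{-2})$, each dominated by one $\widetilde{\mathcal{O}}(n)$ Gibbs preparation (the block-encoding application being subdominant for the balanced schedules used), giving the general bound $\widetilde{\mathcal{O}}(nl^2B\epsilon^{-2})$. To reach the concrete numbers I would instantiate a schedule with uniform step $\beta_{i+1}-\beta_i=\Theta(n^{-1})$: using $\norm{H}_\infty=\cO(n)$ (extensivity), this makes $(\beta_{i+1}-\beta_i)\norm{H}_\infty=\cO(1)$ so each block encoding costs $\widetilde{\mathcal{O}}(1)$, confirms $B=\cO(1)$ via $Z_{\beta_i}/Z_{\beta_{i+1}}=\mathbb{E}_{\sigma_{\beta_{i+1}}}[e^{(\beta_{i+1}-\beta_i)H}]\le e^{(\beta_{i+1}-\beta_i)\lambda_{\max}(H)}=\cO(1)$, and gives $l=\Theta(\beta_{\max}n)=\Theta(n)$ since $\beta_{\max}\le\beta^\ast=\Theta(1)$. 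Plugging $l=\Theta(n)$ and $B=\cO(1)$ into the general bound yields $\widetilde{\mathcal{O}}(n^3\epsilon^{-2})$ for short-range models. The long-range case is identical except that the per-copy Gibbs-preparation cost degrades to $\widetilde{\mathcal{O}}(n^2)$ — the weaker Lieb-Robinson bounds enlarge the effective support that must be Hamiltonian-simulated in each Lindbladian term — promoting the final runtime to $\widetilde{\mathcal{O}}(n^4\epsilon^{-2})$.

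I expect the main obstacle to be the error-propagation bookkeeping rather than any single hard estimate: one must verify that the multiplicative bias from $l$ slightly-off factors does not compound, which is exactly where the balanced-schedule lower bound $Z_{\beta_{i+1}}/Z_{\beta_i}\ge B^{-1}$ is essential, and that a genuinely balanced schedule with $B=\cO(1)$ and $l=\cO(n)$ is compatible with the $\cO(n^{-1})$ step size. A secondary subtlety is the positivity hypothesis $H\ge0$ required by the block-encoding lemma, handled by a constant energy shift that rescales $Z_\beta$ by a known factor.
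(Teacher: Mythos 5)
Your proposal follows essentially the same route as the paper's proof: the telescoping product anchored at $Z_0=2^n$, Bernoulli estimators $X_i$ obtained by applying the approximate block encoding of $e^{-(\beta_{i+1}-\beta_i)H/2}$ to $\sigma_{\beta_i}$ with an ancilla, the Dyer--Frieze lemma with $\mathcal{O}(Bl\epsilon^{-2})$ samples per ratio, precision parameters entering only logarithmically (the paper budgets total-variation closeness $\mathcal{O}(\epsilon^2/(lB))$ per sample where you budget bias $\Theta(\epsilon/(Bl))$; both are valid and interchangeable here), and the uniform schedule $\beta_{i+1}-\beta_i=\Theta(1/\|H\|_\infty)$ giving $B=\mathcal{O}(1)$ and $l=\Theta(n)$. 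Your direct argument for $B=\mathcal{O}(1)$, namely $Z_{\beta_i}/Z_{\beta_{i+1}}=\operatorname{tr}(e^{(\beta_{i+1}-\beta_i)H}\sigma_{\beta_{i+1}})\le e^{(\beta_{i+1}-\beta_i)\|H\|_\infty}$, is if anything cleaner than the paper's appeal to Golden--Thompson and H\"older, since all operators involved commute.

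The one point you should correct is the justification of the long-range exponent. You attribute the extra factor of $n$ to a degraded Gibbs sampler, claiming the weaker Lieb--Robinson bounds enlarge the effective support to be simulated so that a copy of $\sigma_{\beta_i}$ costs $\widetilde{\mathcal{O}}(n^2)$ Hamiltonian simulation time. This contradicts Theorem~\ref{thm.longrange}: rapid mixing holds for long-range systems as well, so each Gibbs state still requires only $\widetilde{\mathcal{O}}(n)$ Hamiltonian \emph{simulation time}, exactly as in the short-range case. What degrades is the cost of \emph{compiling} that simulation time into elementary operations: for short-range lattice Hamiltonians the results of~\cite{Haah_2021} implement evolution time $t$ with $\widetilde{\mathcal{O}}(t)$ gates, whereas for long-range Hamiltonians one must fall back on block-encoding-based simulation, which costs $\widetilde{\mathcal{O}}(t\|H\|_\infty)=\widetilde{\mathcal{O}}(n\cdot n)$ uses of the block encoding per copy (the long-range extension of~\cite{Haah_2021} given in~\cite{TranLong19} scales polynomially in the precision and is not competitive for this application). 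This is how the paper arrives at $\widetilde{\mathcal{O}}(n^2)$ per Gibbs state and hence $\widetilde{\mathcal{O}}(n^4\epsilon^{-2})$ overall; your per-copy figure is numerically right, but the mechanism you give for it is not the correct one.
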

\begin{proof}
	As $\beta^*$ is of constant order, so are all $\beta_{i+1}-\beta_i$. Thus, we can generate a $\mathcal{O}(\frac{\epsilon}{Bl})$ approximate block encoding of $e^{-\tfrac{\beta_{i+1}-\beta_{i}}{2}H}$ from $\widetilde{\mathcal{O}}(\|H\|_\infty)$ uses from the block encoding of $\tfrac{H}{\|H\|_\infty}$. Furthermore, as claimed in the main text, we can prepare a copy of $\sigma_{\beta_i}$ with access to $\widetilde{\mathcal{O}}(n)$ Hamiltonian evolution time of $H$. This can be generated with $\widetilde{O}(n\|H\|_\infty)$ uses of the block encoding of $H/\|H\|_\infty$ and, in the case of short-range lattice Hamiltonians, $\widetilde{O}(n)$~\cite{Haah_2021}. Thus, we conclude that to generate a sample $\widetilde{X}_i$ that is $\mathcal{O}(\tfrac{\epsilon^2}{lB})$ close in total variation distance from $X_i$ defined before, we need $\widetilde{\mathcal{O}}(n+\|H\|_\infty)$ uses of the block encoding.
	In particular, the empirical average of $\mathcal{O}(\epsilon^{-2}Bl)$ such random variables is indistinguishable with success probability $\geq\frac{2}{3}$ from that of the $X_i$. It then follows from Lemma~\ref{lemma:frieze} that we can take the empirical average of $16Bl\epsilon^{-2}$ such samples to obtain the estimator $\hat{Z}_{\beta_{\max}}$ with the claimed relative error. By multiplying the individual sample complexity for each $i$ with the cost to generate one sample, we get the claimed complexity. 
	
	Let us now discuss how to choose the parameters to obtain the claimed runtime for lattice Hamiltonians. Note that it follows from an application of the Golden-Thompson inequality and H\"older inequalities that it is possible to pick $\beta_{i+1}-\beta_i=\mathcal{O}(1/\|H\|_\infty)$ and achieve $B=\mathcal{O}(1)$. This would then yield a schedule with $l=\|H\|_\infty$. As for lattice Hamiltonians (short or long range) we have $\|H\|_\infty=\mathcal{O}(n)$, this gives a total complexity of $\widetilde{\mathcal{O}}(n^3\epsilon^{-2})$ to estimate the partition function to relative precision. 
\end{proof}

\end{document}